\newtheorem*{remark}{Remark}
\DeclareMathOperator*{\argmin}{argmin}
\DeclareMathOperator*{\argmax}{argmax}
\renewcommand{\phi}{\varphi}
\renewcommand{\epsilon}{\varepsilon}
\newcommand{\R}{\mathbb{R}}
\newcommand{\C}{\mathbb{C}}
\newcommand{\N}{\mathbb{N}}
\newcommand{\limit}[2]{\lim \limits_{#1 \to #2}}
\newcommand{\limn}{\limit{n}{\infty}}
\newcommand{\gegen}[2]{\xrightarrow{#1 \to #2}}
\newcommand{\drho}{\tfrac{\mathrm{d}}{\mathrm{d} \rho}}
\newcommand{\drs}{\tfrac{\mathrm{d}}{\mathrm{d} r_{s}}}
\newcommand{\E}{\mathcal{E}}
\newcommand{\A}{\mathcal{A}}
\newcommand{\M}{\mathcal{M}}
\newcommand{\intr}{\int_{\R^{3}}}
\newcommand{\di}{\, \mathrm{d}}
\newcommand{\e}{\mathrm{e}}
\newcommand{\eh}{\hat{\e}}
\numberwithin{equation}{section}
\newtheorem{theorem}{Theorem}
\newtheorem{lem}{Lemma}
\title{Existence and nonexistence of HOMO-LUMO excitations \\ in Kohn-Sham density functional theory}
\author{Gero Friesecke and Benedikt Graswald \\[1mm] 
\small Department of Mathematics, Technische Universit\"at M\"unchen \\[-1mm]
\small {\tt gf@ma.tum.de, graswabe@ma.tum.de}}
\begin{document}

\maketitle

\vspace*{-3mm}

\begin{abstract}
In numerical computations of response properties of electronic systems, the standard model is Kohn-Sham density functional theory (KS-DFT). Here we investigate the mathematical status of the simplest class of excitations in KS-DFT, HOMO-LUMO excitations. We show using concentration-compactness arguments that such excitations, i.e. excited states of the Kohn-Sham Hamiltonian, exist for $Z>N$, where $Z$ is the total nuclear charge and $N$ is the number of electrons. The result applies under realistic assumptions on the exchange-correlation functional, which we verify explicitly for the widely used PZ81 and PW92 functionals. By contrast, and somewhat surprisingly, we find using a method of Glaser, Martin, Grosse, and Thirring \cite{glaser1976} that in case of the hydrogen and helium atoms, excited states do not exist in the neutral case $Z=N$ when the self-consistent KS ground state density is replaced by a realistic but easier to analyze approximation (in case of hydrogen, the true Schr\"{o}dinger ground state density). Implications for interpreting minus the HOMO eigenvalue as an approximation to the ionization potential are indicated.
\end{abstract}
\vspace*{3mm}

\begin{small}
\begin{spacing}{0.01}
\tableofcontents
\end{spacing}

\end{small}

\noindent 

\section{Introduction}

Electronic excitations play an important role in the description of molecular properties such as absorption spectra, photoexcitation, state-to-state transition probabilities, reactivity, charge transfer processes, and reaction kinetics \cite{cramer2002essentials, Dai_Ho1995,Ho1996}.
In numerical computations of these response properties, the standard model is Kohn-Sham density functional theory (KS-DFT), because of its good compromise between accuracy and feasibility for large systems (see \cite{ParrYang1994} for a textbook account and \cite{Becke2014} for a recent review). It is then of interest to investigate the mathematical status of excitations in KS-DFT.

In this paper we analyze the simplest such excitations, HOMO-LUMO transitions, in the setting of the local density approximation (LDA). For a systematic comparison of HOMO-LUMO excitations with experimental data see e.g. \cite{Baerends2013, Zhang2007}. Even in this case we are not aware of previous rigorous results. Our findings are the following (see Figure \ref{fig:spectra}).

\begin{figure}[h]
    \centering
    \includegraphics[width = 0.65 \textwidth]{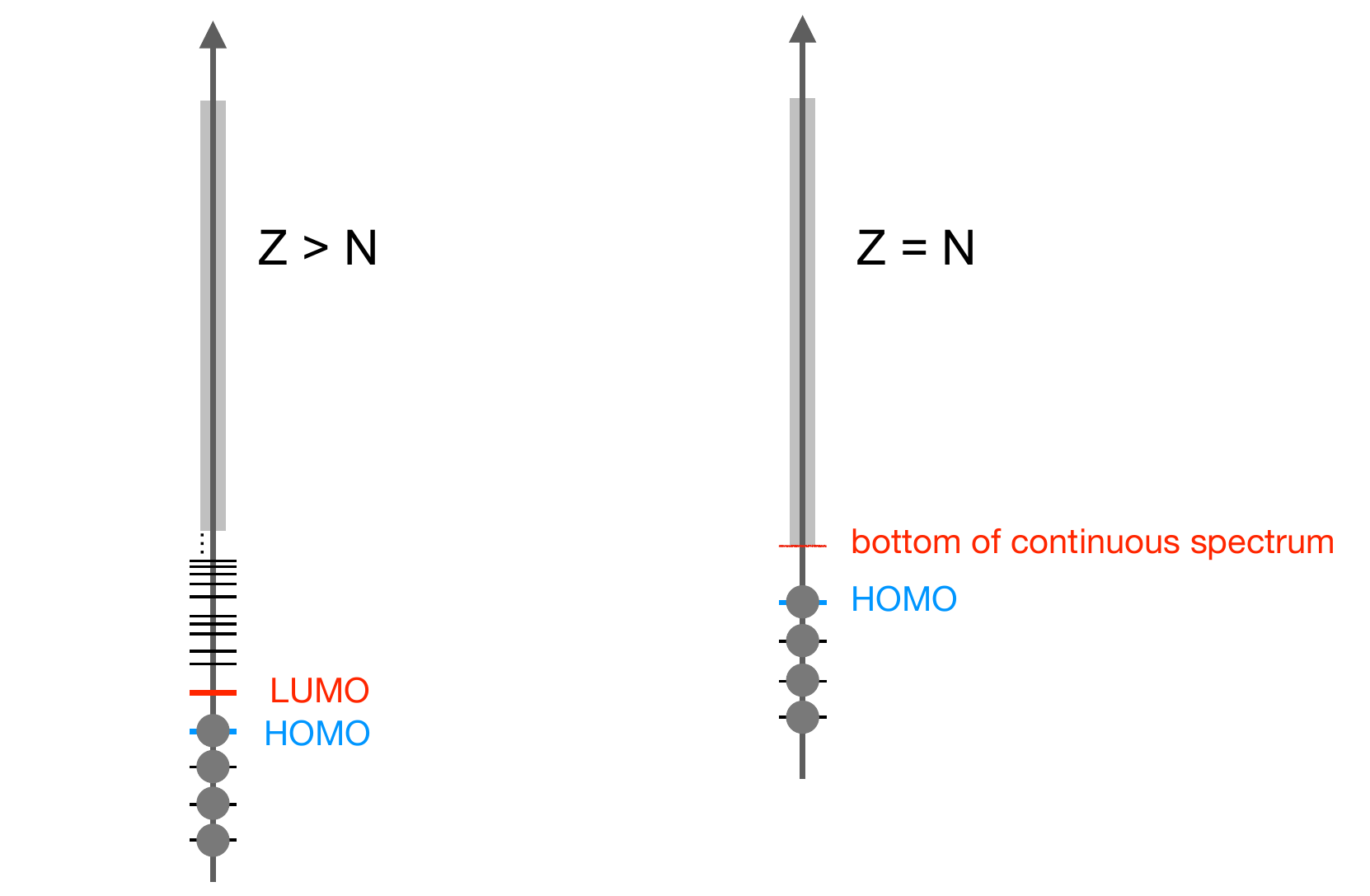}
    \caption{Schematic picture of the spectrum of the KS Hamiltonian. Positively charged systems (left, $Z>N$) have infinitely many excited states above the HOMO and below the continuous spectrum (see Theorem \ref{thm:neutral_case}). For neutral systems (right, $Z = N$), it can happen that there are no excited states, that is, the highest bound state eigenvalue is the HOMO (see Theorem \ref{thm:nonexistence}).}
    \label{fig:spectra}
\end{figure}

For positively charged systems (i.e., total nuclear charge $Z$ greater than the number $N$ of electrons) such excitations -- mathematically, excited states of the KS Hamiltonian -- are rigorously proven to exist, under realistic assumptions on the exchange-correlation functional which we verify explicitly for the widely used PZ81 and PW92 functionals. See Theorem \ref{thm:lumo} in Section 3.
As a corollary we also establish existence of {\it optimal} excitations with respect to suitable  control goals recently introduced in \cite{Friesecke-kniely}, without requiring the simplifying assumption in \cite{Friesecke-kniely} of bounded domains. See Section 4. 

By contrast, the neutral case $Z=N$ holds a surprise. In the case of the hydrogen and helium atoms, we prove that excited states do not exist when the self-consistent KS ground state density is replaced by a realistic but easier to analyze closed-form approximation (in case of hydrogen, the true Schr\"odinger ground state density).
See Theorem \ref{thm:nonexistence} in Section 5.

Mathematically, the existence result relies on concentration-compactness arguments, and should not come as a surprise to experts. The nonexistence result uses a not widely known method by Glaser, Martin, Grosse, and Thirring (GMGT) \cite{glaser1976}. The latter method could, in principle, also be applied to numerical KS ground state densities; we expect that for {\it some} atoms and molecules, including hydrogen and helium, the GMGT nonexistence criterion (that a certain integral associated with the effective KS potential lies below a threshold value) would be satisfied.

Physically, these results indicate a significant artefact of KS-DFT. In the full $N$-electron Schr\"odinger equation, neutral systems (and even systems with $Z>N-1$) are known to possess infinitely many excited states below the bottom of the continuous spectrum. 
This is a celebrated result by Zhislin \cite{zhislin1960discussion}; 
for a modern variational proof see \cite{Friesecke2003}. 
The analogous result also holds in Hartree-Fock theory: 
for $Z>N$ the Fock operator associated with the Hartree-Fock ground state density possess infinitely many bound states below the continuous spectrum  \cite[Lemma II.3]{Lions-Hartree}, the latter being the interval $[0, \infty)$.
(It is also known \cite{LewinHF} that the Hartree-Fock enegery functional possess infinitely many critical points below 0.)
Our results suggest that in KS-DFT, the threshold for existence of infinitely many excited states is shifted from $Z>N-1$ to $Z>N$. This is a previously unnoticed but important qualitative consequence of the (well known) incomplete cancellation of the self-interaction energy in KS-DFT.

It is interesting to interpret the nonexistence of excitations from the point of view of numerical computations in finite basis sets, or mathematical analysis (as in \cite{Friesecke-kniely}) in bounded domains. Consider a neutral system for which (exact) excitations do not exist. In a finite basis set, or a bounded domain, the spectrum of the KS Hamiltonian is purely discrete and therefore excited states exist. In the limit as the basis set approaches completeness, or the domain approaches the whole of $\R^3$, 

(i) the LUMO energy $\varepsilon_L$ (i.e., the lowest unoccupied eigenvalue of the KS Hamiltonian) will remain well-defined, and approaches the bottom of the continuous spectrum (which equals $0$, see Theorem \ref{thm:neutral_case} and Theorem \ref{thm:nonexistence})

(ii) the LUMO (i.e., the lowest unoccupied eigenstate) will become more and more delocalized, failing to converge to a bound state.

Thus in contrast to common (explicit or implicit) belief, restriction to finite basis sets or bounded domains may be not just a negligible technicality, but significantly alters the physical nature of LUMO excitations, from stable bound state (i.e., invariant under the dynamics of the KS ground state Hamiltonian) to a delocalized, dispersing state associated with the continuous spectrum. 

(ii) makes it very tempting to physically interpret the HOMO-LUMO excitation in the nonexistence case as an (approximation to an) ionization process. This interpretation together with (i) yields {\it ionization potential} $\approx \varepsilon_{L}-\varepsilon_{H}=0-\varepsilon_{H}$ (where $\varepsilon_H$ is the HOMO energy, i.e. the highest occupied eigenvalue of the KS Hamiltonian), 
lending new theoretical support to 
the famous semi-empirical formula 
$$
              -  \varepsilon_H  \; \approx \;  \mbox{\it ionization potential}     
$$
which often agrees quite well with experimental data \cite{Baerends2013, Zhang2007}.

\section{Mathematical setting \label{sec:mathematical_setting}}
We start by recalling well-known mathematical facts about Kohn-Sham density functional theory (KS-DFT) \cite{kohnsham65, hohenbergkohn64, ParrYang1994}.
Readers familiar with these facts might want to skip this part.
After that we give a variational definition of HOMO-LUMO excitations as introduced recently in \cite{Friesecke-kniely}, which works irrespective of degeneracies and is convenient for the mathematical analysis of excitations. 

\subsection{Kohn-Sham equations}

We consider a system of $N$ non-relativistic electrons in $\R^3$ in the electrostatic potential generated by $M$ nuclei of charges $Z_1, \ldots, Z_M$ located at positions $R_1, \ldots, R_M \in \R^3$,
\begin{equation} \label{eq:coloumb_special_case}
    v_{ext}(x) = - \sum_{\alpha = 1}^{M} Z_\alpha \frac{1}{|x - R_\alpha |}.
\end{equation}
In fact, for our analysis it is not essential that the nuclei are point particles.
It suffices to assume the nuclear charge distribution is given by a nonnegative Radon measure $\mu $ with total mass $Z>0$ supported on a compact set $\Omega_{nuc} \subseteq \R^3$, i.e. we consider any $\mu$ belonging to 
\begin{equation} \label{def:A_nuc}
 \mathcal{A}_{nuc} := \{ \mu \in \mathcal{M}(\Omega_{nuc}) : ~\mu \geq 0, ~ \int_{\Omega_{nuc}} \di \mu = Z \},  
\end{equation}
where $\mathcal{M}(\Omega_{nuc})$ denotes the space of signed Radon measures on $\Omega_{nuc}$, and 
\begin{equation}\label{eq:coloumb_potential}
    v_{ext}(x) := - \int_{\Omega_{nuc}} \frac{1}{|x-y|} \di \mu(y).
\end{equation}

For simplicity we look at a spin-unpolarized system, so the number $N$ of electrons is even, i.e. $N =2n$ for some $n \in \N$.
In this case Kohn-Sham DFT describes the electrons by $n$ orbitals $\phi_1, \ldots, \phi_n: \R^3 \to \C$, each occupied by two electrons of opposite spin.
They are 
$L^2$-orthonormal, i.e. 
\begin{equation} \label{eq:ortho}
   \langle \phi_i, \phi_j \rangle_{L^2} = \intr \phi_i(x) \phi_j(x) \di x = \delta_{ij} \qquad \forall i,j \in \{1,\ldots,n\},
\end{equation}
 and we denote  $\Phi := (\phi_1, 
\ldots, \phi_n)$.
Note that in the following $\langle \cdot, \cdot \rangle$ will always denote the $L^2$ inner product.
Then the corresponding Kohn-Sham energy functional is given by

\begin{equation}\label{eq:ks-energy-functional}
\hspace*{-5mm}
\E_\mu[\Phi] 
= 
\underbrace{\sum_{k=1}^{n} 2 \intr \frac{1}{2} | \nabla \phi_k|^2(x) \di x }_{=:T[\Phi]}
+ \underbrace{ \intr v_{ext}(x) \rho(x) \di x }_{=:V[\rho]}
+ \underbrace{\frac{1}{2} \intr \intr \frac{\rho(x) \rho(y)}{|x-y|}\di x \di y }_{=: J[\rho]}
+ \underbrace{ \intr e_{xc} (\rho(x)) \di x}_{=:E_{xc}[\rho]},
\end{equation}
where $e_{xc}$ gives the exchange-correlation energy per unit volume and $\rho$ is the total electron density, that is,

\begin{equation}
   \rho(x) := 2 \sum_{k=1}^n |\phi_k(x)|^2.
\end{equation}
The KS energy hence consists of the following terms:
$T$ is  the  kinetic  energy of the electrons, $V$ is  the potential energy  from the electron-nuclei interaction with $v_{ext}$ being  the electrostatic  potential  of  the  nuclei \eqref{eq:coloumb_potential}; $J_H$ (the  Hartree  energy)   describes the energy corresponding to the interelectron repulsion if the electrons were mutually independent; $E_{xc}$ is the exchange-correlation energy which  accounts for correlation effects correcting the simple independent ansatz of $J_H$.

Note here that the Coulomb potential over the whole $\R^3$ is not in any $L^p$-space, but it is in $L^2(\R^3) + L^\infty(\R^3)$ and we will be using the splitting $\tfrac{1}{|\cdot|} = v_2 + v_{\infty}$, where $v_2, v_\infty$ lie in  $L^2,L^\infty$, respectively.

Precise assumptions on $e_{xc}$ which are sufficient for our mathematical results and cover standard local density approximation (LDA) exchange-correlation functionals used in practice are given in Section \ref{sec:analytic-set-up}.
A basic example derived from the homogeneous electron gas is the Dirac exchange energy
\begin{equation} \label{eq:dirac_exchange}
e_{xc}(\rho) = - \tfrac{3}{4} \left( \tfrac{3}{\pi} \right)^{\tfrac{1}{3}} \rho^{\frac{4}{3}}.
\end{equation}
The ground state of the system is given by 
\begin{equation}
    \Phi \in \argmin \E_{\mu} \text{ subject to the constraints } \eqref{eq:ortho} .
\end{equation}

Any ground state $\Phi : = (\varphi_1, \ldots, \varphi_N)$  satisfies the Euler-Lagrange equations of the system, the Kohn-Sham equations
\begin{equation} \label{eq:ks-langrange}
h_{\mu, \rho} \phi_i :=
\left( - \frac{1}{2} \Delta + v_{ext} + v_H + v_{xc} \right) \phi_i 
= \sum_{j =1}^{N} \lambda_{ij} \phi_j,
\end{equation}
where the Lagrange multipliers $\lambda_{ij}$ arise due to the orthonormality condition \eqref{eq:ortho}. The Hartree and exchange-correlation potentials are given by
\begin{equation}
    v_H(x) = \intr \frac{1}{|x-y|} \di \mu(y),
    \qquad
    v_{xc} = \drho e_{xc}.
\end{equation}{}
Since the effective one-body operator $h_{\mu, \rho}$ in \eqref{eq:ks-langrange} (the Kohn-Sham Hamiltonian) is invariant under unitary transformations, the KS equations can be brought into their canonical form 
\begin{equation} \label{eq:ks-canonical}
h_{\mu, \rho} \phi_i :=
\left( - \frac{1}{2} \Delta + v_{ext} + v_H + v_{xc} \right) \phi_i 
= \epsilon_i \phi_i.
\end{equation}

\subsection{Excitations}

Following \cite{Friesecke-kniely} we confine ourselves here to the simplest model for electronic excitations, the HOMO-LUMO transition. In this transition an electron pair migrates from the highest occupied molecular orbital (HOMO) to the lowest unoccupied molecular orbital (LUMO).
For the KS-orbitals $\Phi = (\phi_1, \ldots, \phi_n)$ ordered by the size of their eigenvalue in \eqref{eq:ks-canonical} this means
\begin{equation}
    \big( \varphi_1, \ldots, \varphi_{n-1}, \textcolor{blue}{\varphi_n} \big) 
    \enskip
    \longrightarrow
    \enskip
     \big( \varphi_1, \ldots, \varphi_{n-1}, \textcolor{red}{\varphi_{n+1}} \big),
\end{equation}
where $\phi_n$ is the HOMO and $\phi_{n+1}$ - the eigenstate corresponding to the next higher eigenvalue of $h_{\mu, \rho}$ - is the LUMO.

To define HOMO and LUMO in a variational way, we consider the \textit{excitation energy functional} \cite{Friesecke-kniely} given by the quadratic form associated with KS Hamiltonian $h_{\mu, \rho}$ \eqref{eq:ks-canonical},

\begin{equation} \label{eq:excitatin_functional}
    \E_{\mu, \rho} [\chi] 
    = \langle \chi, h_{\mu, \rho } \chi \rangle 
    = \frac{1}{2} \intr |\nabla \chi|^2 + \intr \big( v_{ext} + v_H + v_{xc}(\rho) \big) |\chi|^2.
\end{equation}

Now define a HOMO $\phi_H$ by
\begin{equation}
    \phi_H \in \argmax \E_{\mu, \rho} \text{ subject to the constraints } \phi_H \in \mathrm{Span}\{ \phi_1, \ldots, \phi_n\}, \enskip \langle \phi_H, \phi_H \rangle = 1
\end{equation}

and a LUMO $\phi_L$ by 
\begin{equation}
    \phi_L \in \argmin \E_{\mu, \rho} \text{ subject to the constraints } \langle \phi_i, \phi_L \rangle = 0,~ \forall i \in \{1, \ldots, n \} \enskip, \langle \phi_L, \phi_L \rangle = 1.
\end{equation}

If they exist, HOMO and LUMO clearly satisfy the KS equations
\begin{equation}
    h_{\mu, \rho} \phi_H = \epsilon_H \phi_H, \quad 
    h_{\mu, \rho} \phi_L = \epsilon_L \phi_L,
\end{equation}
for some eigenvalue $\epsilon_H$ (the HOMO energy) and  $\epsilon_L$ (the LUMO energy).\\

\section{Existence of HOMO-LUMO excitations \label{sec:existence}}

In this section we show that for positively charged systems $(Z>N)$ there always exist HOMO-LUMO excitations.
This generalizes a corresponding result in \cite{Friesecke-kniely} to unbounded domains, except that in bounded domains no restriction on $Z$ are needed.
In the latter case such a result is not straightforward due to the possibility of ``mass escaping to infinity'',
and requires concentration-compactness arguments \cite{Lions1, Lions2}.
The reader may wonder whether the assumption $Z>N$, which is essential in our proof, is really necessary. 
The authors of course asked themselves the same question.
For counterexamples to existence in the case $Z = N$ see Section \ref{sec:non_existence}.
\subsection{Assumptions  \label{sec:analytic-set-up}}

\textbf{Assumptions on the exchange-correlation energy}

We assume that $e_{xc}:[0,\infty) \to \R \text{ is continuously differentiable }$ with

\begin{equation} \label{eq:assumptions}
 e_{xc}(0) =0= v_{xc}(0), \enskip v_{xc} \leq 0,  \enskip |v_{xc}(\rho)| \leq c_{xc} (1 + \rho^{p-1})
\end{equation}
for some exponent $p$ with $p \in [1, \frac{5}{3})$ and constant $c_{xc} >0$.\\

Furthermore we need as in \cite{cances}:

 \begin{equation} \label{eq:assumption_at_zero}
    \text{There exists } q \in \big[1, \tfrac{3}{2}\big)  \text{ such that } \limsup_{\rho \to 0+} \frac{e_{xc}(\rho)}{\rho^{q}} < 0.
\end{equation}{}

\begin{remark}\label{rem:assumptions_dirac}
These assumptions are trivially satisfied by the Dirac exchange \eqref{eq:dirac_exchange} with $p = q = \tfrac{4}{3}$.
In our appendix we check explicitly that these assumptions are also satisfied for the two most popular LDA exchange-correlation functionals: Perdew-Zunger (PZ81) \cite{perdew-zunger} and Perdew-Wang (PW92) \cite{perdew-wang}
\end{remark}

\textbf{Admissible sets of orbitals}
In order to write these definitions in a more compact way we introduce the following sets:
The KS admissible set 
\[
\A = \{(\phi_1, \ldots, \phi_n) \in H^1(\R^3)^n : \langle \phi_i, \phi_j \rangle = \delta_{ij} \}
\]
the HOMO admissible set
\[ 
\A^{H}_{\Phi} = \big\{ \phi_H \in \mathrm{Span}\{\phi_1, \ldots, \phi_n \} : ||\phi_H||_2 = 1 \big\}
\]
and the LUMO admissible set
\[
\A_{\Phi}^{L} = \big\{ \phi_L \in H^1(\R^3) : \langle \phi_k, \phi_L\rangle =0 \text{ for } k \in \{1,\ldots,n\}, ||\phi_L||_2 =1   \big\}. 
\]

The governing variational principles for the occupied KS orbitals, HOMO and LUMO can now be summarized as 

\begin{equation}\label{eq:definition_homo_lumo}
    \Phi \in \argmin_{\A} \E_{\mu}, 
    \quad
    \phi_H \in \argmax_{\A_{\Phi}^{H}}\E_{\mu, \rho},
    \quad
    \phi_L \in \argmin_{\A_{\Phi}^{L}} \E_{\mu, \rho}.
\end{equation}

We start with some estimates for the KS energy functional \eqref{eq:ks-energy-functional}.
\begin{lem}[Lower bounds on KS energy functional]\label{lem:bounds_ks}
The terms in the KS energy functionals have the following properties
\begin{enumerate}
    \item $T[\Phi] \geq \tfrac{1}{2} T[\Phi] + \tfrac{1}{c} ||\rho||_{3}$ and   
    \[ 
    \Phi \mapsto T[\Phi] \text{ is continuous and weakly lower semicontinuous on } H^1(\R^3)^n.
    \]
    \item $V[\rho] \geq - ||\mu||_{\mathcal{M}} \big(||v_\infty||_\infty ||\rho||_1 + ||v_2||_2 ||\rho||^{\nicefrac{1}{4}}_1 ||\rho||_{3}^{\nicefrac{3}{4}}  \big)$ and 
    \[ (\Phi, \mu) \mapsto V[\rho] \text{ is strong $\times$ weak$^{*}$ continuous on } \big(L^{4} \cap L^2 \big)\times \mathcal{M} .
    \]
    \item $J_H \geq 0$ and 
    \[
    \Phi \mapsto J_H[\rho] \text{ continuous on } \big(L^{\nicefrac{12}{5}}(\R^3)\big)^n .
    \]
    \item $E_{xc}[\rho] \geq - c_{xc} \big( ||\rho||_1 + \frac{1}{p-1} ||\rho||_1^{\nicefrac{(3-p)}{2}} ||\rho||_{3}^{\nicefrac{3(p-1)}{2}} \big),$ where $p \in (1, \tfrac{5}{3})$ is the exponent from assumption \eqref{eq:assumptions}, and 
    \[
    \Phi \mapsto E_{xc}[\rho] \text{ is continuous on } \big(L^{2p}(\R^3)\big)^n.
    \]
\end{enumerate}

\end{lem}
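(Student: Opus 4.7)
The plan is to prove the four bounds and their accompanying continuity/semicontinuity statements one at a time, since each rests on a different analytic tool: Sobolev embedding for $T$, the $L^2+L^\infty$ splitting of the Coulomb singularity announced just before the lemma for $V$, the Hardy--Littlewood--Sobolev inequality for $J_H$, and integration of the pointwise growth assumption \eqref{eq:assumptions} for $E_{xc}$. In each case the continuity statement follows quickly from the explicit bound together with a routine Nemytskii or bilinearity argument.

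For Part~1, I would apply $H^1(\R^3)\hookrightarrow L^6(\R^3)$ orbital by orbital: since $\rho=2\sum_k|\phi_k|^2$, Sobolev gives $\|\rho\|_3\leq 2\sum_k\|\phi_k\|_6^2\leq C\sum_k\|\nabla\phi_k\|_2^2 = 2C\,T[\Phi]$. After absorbing $2C$ into the constant $c$ one gets $\tfrac12 T[\Phi]\geq c^{-1}\|\rho\|_3$, hence $T=\tfrac12 T+\tfrac12 T\geq\tfrac12 T+c^{-1}\|\rho\|_3$. Continuity and weak lower semicontinuity of $T$ on $H^1(\R^3)^n$ are immediate because $T$ is a sum of squared $H^1$-seminorms.

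For Part~2, I would use $1/|x|=v_2(x)+v_\infty(x)$ with $v_2\in L^2$, $v_\infty\in L^\infty$ to write $|V[\rho]|\leq \int_{\Omega_{nuc}}\int_{\R^3}\bigl(v_2+v_\infty\bigr)(x-y)\,\rho(x)\,dx\,d\mu(y)$. The $v_\infty$ piece is controlled by $\|v_\infty\|_\infty\|\mu\|_{\M}\|\rho\|_1$; the $v_2$ piece, via Cauchy--Schwarz in $x$, by $\|v_2\|_2\|\mu\|_{\M}\|\rho\|_2$, and then Lyapunov interpolation $\|\rho\|_2\leq\|\rho\|_1^{1/4}\|\rho\|_3^{3/4}$ yields the stated bound. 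For the joint continuity I would rewrite $V[\rho]=-\int_{\Omega_{nuc}}W_\rho(y)\,d\mu(y)$ with $W_\rho(y)=\int\rho(x)/|x-y|\,dx$; since $\phi_k\in L^2\cap L^4$ implies $\rho\in L^1\cap L^2$, a short potential-theoretic argument shows $W_\rho\in C(\Omega_{nuc})$, so weak-$^*$ continuity in $\mu$ follows, while strong continuity in $\Phi$ is a direct consequence of the bound.

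Parts~3 and~4 are the quickest. For Part~3, $J_H\geq 0$ is immediate from positive-definiteness of the Coulomb kernel, and the Hardy--Littlewood--Sobolev inequality gives $J_H[\rho]\leq C\|\rho\|_{6/5}^2$; since $\phi_k\in L^{12/5}$ iff $\rho\in L^{6/5}$, continuity on $(L^{12/5})^n$ follows from the bilinearity of $J_H$ together with this bound. For Part~4, I would integrate $|v_{xc}(s)|\leq c_{xc}(1+s^{p-1})$ from $0$ to $\rho$, using $e_{xc}(0)=0$, to obtain $e_{xc}(\rho)\geq -c_{xc}\bigl(\rho+C_p\rho^p\bigr)$; integrating over $\R^3$ and applying Lyapunov interpolation $\|\rho\|_p^p\leq\|\rho\|_1^{(3-p)/2}\|\rho\|_3^{3(p-1)/2}$, which is valid since the hypothesis $p\in(1,\tfrac53)$ places the interpolation parameter in $(0,1)$, yields the stated lower bound. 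Continuity on $(L^{2p})^n$ then follows from standard continuity of the Nemytskii map $\rho\mapsto e_{xc}(\rho)$ from $L^p$ into $L^1$, which is guaranteed by the local Lipschitz behaviour of $e_{xc}$ implied by the growth bound on $v_{xc}$.

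No single step is really an obstacle; the main bookkeeping is matching the interpolation exponents to the precise form of the bound (most delicately in Part~4), and ensuring that the constant in Part~1 is chosen so that half of $T[\Phi]$ can be split off cleanly.
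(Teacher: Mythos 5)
Your proof is correct and, apart from Part~1, follows the same route as the paper: the $L^2+L^\infty$ splitting of the Coulomb singularity together with the $\M$--$C_b$ duality and Cauchy--Schwarz for $V$, positive-definiteness of the kernel for $J_H\ge 0$, integration of the pointwise growth bound on $v_{xc}$ followed by H\"{o}lder interpolation for $E_{xc}$, and the explicit bounds plus continuity of the integrands for the continuity statements. In Part~1 you apply the Sobolev embedding $H^1(\R^3)\hookrightarrow L^6(\R^3)$ orbital by orbital through the triangle inequality $\|\rho\|_3\le 2\sum_k\|\phi_k\|_6^2$, whereas the paper first invokes the Hoffmann--Ostenhof-type inequality $T[\Phi]\ge\|\nabla\sqrt{\rho}\|_2^2$ (cited from Catto--Le~Bris--Lions) and then applies Sobolev to the single function $\sqrt{\rho}$. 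Both yield $\|\rho\|_3\le C\,T[\Phi]$, which is what the stated inequality really asserts; your route is slightly more elementary and avoids quoting an auxiliary density inequality, at the cost of a cruder constant, which is immaterial here. The interpolation exponents you compute in Parts~2 and~4 and the HLS-based continuity of $J_H$ in Part~3 match the paper.
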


\begin{proof}
The first inequality is a standard result in DFT, but we include it for the sake of completeness.
By a well-known result, see e.g. \cite{catto_lebris_lions}, we have $T[\Phi] \geq ||\nabla \sqrt{\rho}||^{2}_{2}$,  so the inequality follows by applying the Sobolev embedding $H^1 \hookrightarrow L^6$ to the function $u = \sqrt{\rho}$.

Estimate 2 follows from the duality between $\M(\Omega_{nuc})$ and $C_b(\Omega_{nuc})$ and then Cauchy-Schwarz
\[
V[\rho] = - \intr \left( \frac{1}{|\cdot|} \ast \rho\right) \di \mu 
\geq 
- ||\mu||_{\M}  ||\tfrac{1}{|\cdot|} \ast \rho||_{\infty} 
\geq 
- ||\mu||_{\M}  \big( ||v_2||_{2} ||\rho||_{2} + ||v_{\infty}||_{\infty}  ||\rho||_{1} \big)
\]
and finally bounding the $L^2$-norm of $\rho$ by the H\"{o}lder interpolation inequality,

\begin{equation} \label{eq:hoelder-int}
||\rho||_{p} \leq ||\rho||_{q}^{\theta} ||\rho||_{r}^{1-\theta} 
\text{ with } q \leq p \leq r, \quad  \tfrac{1}{p} = \tfrac{\theta}{q} + \tfrac{1- \theta}{r} 
\end{equation}
with $p=2, q=1,r=3$.

The positivity of $J_H$ is trivial. \\
Ad 4:
By our assumtpion on $v_{xc}$ we have
\[
| e_{xc}(\rho)| = \big| e_{xc}(0) + \int_{0}^{\rho} v_{xc}(\xi) \di \xi \big|
\leq
c_{xc} \big(\rho + \tfrac{1}{p-1} \rho^p \big)
\]
The estimate again follows from H\"{o}lder interpolation \eqref{eq:hoelder-int} with $q=1, r = 3, \theta = \tfrac{(3-p)}{2p}$.
In all four cases the continuity results follow by pointwise continuity of the integrand and the proven bounds.
\end{proof}

The second Lemma considers the excitation functional \eqref{eq:excitatin_functional}. In the following we denote $\rho_\chi = |\chi|^2$.

\begin{lem}[Lower bounds on excitation functional]\label{lem:bounds_excitation_energy}
The terms in the excitation functional have the following properties 
\begin{enumerate}
    \item $T[\psi] \geq \tfrac{1}{2} T[\psi] + \frac{1}{c} ||\rho_\psi||_3$ and 
    \[
    \chi \mapsto T[\chi] \text{ is coninuous and weakly lower semicontinuous on } H^1(\R^3).
    \]
    \item $\intr v_{ext} \rho_\psi \geq  - ||\mu||_{\mathcal{M}} \big(||v_\infty||_\infty ||\rho_\psi||_1 + ||v_2||_2 ||\rho_\psi||^{\nicefrac{1}{4}}_1 ||\rho_\psi||_{3}^{\nicefrac{3}{4}}  \big)$ and 
    \[ (\chi, \mu) \mapsto \intr v_{ext} \rho_{\chi} \text{ is strong $\times$ weak$^{*}$ continuous on } \big(L^{4} \cap L^2 \big)\times \mathcal{M} .
    \]
    \item $\intr (\tfrac{1}{|\cdot|} \ast \rho) \rho_{\psi} \geq 0$ and 
    \[
    (\Phi, \chi) \mapsto \intr \big(\frac{1}{|\cdot| } \ast \rho \big) \rho_{\chi} \text{ is continuous on } L^{\nicefrac{12}{5}}(\R^3)^{n+1}.
    \]
    \item $\intr v_{xc}(\rho) \rho_\psi \geq - c_{xc} \big( ||\rho_\psi||_1 + ||\rho||_{p}^{p-1} ||\rho_\psi||_{1}^{\nicefrac{(3-p)}{(2p)}} ||\rho_\psi||_{3}^{\nicefrac{3(p-1)}{(2p)}}\big)$ where $p \in [1,\tfrac{5}{3})$ is again the exponent from our assumptions on $v_{xc}$  and 
    \[
    (\Phi, \chi) \mapsto \intr v_{xc}(\rho) \rho_{\chi} \text{ is continuous on } L^{2p}(\R^3)^n \times \big(  L^2(\R^3) \cap L^{2p}(\R^3) \big).
    \]
\end{enumerate}
In particular, the map $(\Phi, \chi, \mu) \mapsto \E_{\mu, \rho}[\chi]$ is weak $\times$ strong $\times$ weak$^*$ continuous and weak $\times$ weak $\times$ weak$^*$ lower semicontinuous on $ H^1(\R^3)^n \times H^1(\R^3) \times \mathcal{M}$.
\end{lem}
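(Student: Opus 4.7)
The plan is to mirror the structure of Lemma \ref{lem:bounds_ks} term by term, with $\rho_\chi := |\chi|^2$ playing the role that $\rho$ played there, and then to assemble the pieces into the combined continuity/lower-semicontinuity conclusion. For part 1, since $\sqrt{\rho_\chi}=|\chi|$, the diamagnetic-type inequality $|\nabla|\chi||\le|\nabla\chi|$ gives $||\nabla\sqrt{\rho_\chi}||_2^2\le 2T[\chi]$, and the Sobolev embedding $H^1(\R^3)\hookrightarrow L^6(\R^3)$ applied to $|\chi|$ yields $||\rho_\chi||_3=||\chi||_6^2\le c\,T[\chi]$; weak lower semicontinuity of $T$ is the standard fact that the $H^1$ seminorm is weakly lsc. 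Part 2 is a verbatim copy of the duality-plus-Cauchy--Schwarz-plus-H\"older-interpolation argument of Lemma \ref{lem:bounds_ks}(2), with $\rho$ replaced throughout by $\rho_\chi$.

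For part 3, positivity is immediate from $\rho,\rho_\chi\ge 0$ and the positivity of the Coulomb kernel, while continuity follows from Hardy--Littlewood--Sobolev,
\[
\intr\intr \frac{\rho(x)\rho_\chi(y)}{|x-y|}\di x\,\di y \;\le\; C\,||\rho||_{\nicefrac{6}{5}}||\rho_\chi||_{\nicefrac{6}{5}},
\]
combined with the passage from $L^{\nicefrac{12}{5}}$ of the orbitals to $L^{\nicefrac{6}{5}}$ of the densities used in Lemma \ref{lem:bounds_ks}(3). For part 4 I would use $v_{xc}\le 0$ together with the growth bound $|v_{xc}(\rho)|\le c_{xc}(1+\rho^{p-1})$ from \eqref{eq:assumptions} to get the pointwise inequality $v_{xc}(\rho)\rho_\chi\ge -c_{xc}(1+\rho^{p-1})\rho_\chi$; integrating and applying H\"older with conjugate exponents $(p/(p-1),p)$ yields $\intr \rho^{p-1}\rho_\chi\le ||\rho||_p^{p-1}||\rho_\chi||_p$, and the same H\"older interpolation \eqref{eq:hoelder-int} between $L^1$ and $L^3$ used in Lemma \ref{lem:bounds_ks}(4), now applied to $\rho_\chi$ with $\theta=(3-p)/(2p)$, produces the stated bound. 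Continuity on $L^{2p}(\R^3)^n\times(L^2\cap L^{2p})$ is obtained as there, by dominated convergence using the growth estimate on $v_{xc}$.

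The combined statement is then assembled from the four parts. Weak $\times$ strong $\times$ weak${}^*$ continuity is direct: strong convergence $\chi_k\to\chi$ in $H^1$ gives strong convergence in every $L^q$, $q\in[2,6]$, hence $\rho_{\chi_k}\to\rho_\chi$ strongly in $L^q$, $q\in[1,3]$, and the orbital and measure dependences are handled as in Lemma \ref{lem:bounds_ks}. The main obstacle is the weak $\times$ weak $\times$ weak${}^*$ lower semicontinuity: $T[\chi]$ is weakly lsc and the Hartree cross term is lsc by Fatou, but the external-potential and exchange-correlation cross terms couple $\chi_k\rightharpoonup\chi$ and $\Phi_k\rightharpoonup\Phi$ nonlinearly and are not manifestly lsc. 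I would handle them by combining Rellich--Kondrachov, which yields strong $L^{2p}_{\mathrm{loc}}$ convergence of both $\chi_k$ and each component of $\Phi_k$ and hence strong $L^q(B_R)$ convergence of $\rho_{\chi_k}$ and $\rho_k$ for the relevant $q$, with the uniform bounds established in parts 1--4; the tails on $\{|x|>R\}$ are then controlled via the H\"older interpolation between $L^1$ and $L^3$, together with $||\rho_{\chi_k}||_1=1$ and the uniform $L^3$ bounds extracted from the kinetic estimate, so that they can be made arbitrarily small uniformly in $k$ as $R\to\infty$. This gives in fact continuity (not merely lsc) of the potential and exchange-correlation cross terms in the weak topology, which together with weak lsc of $T$ yields the desired weak lsc of $\E_{\mu,\rho}[\chi]$.
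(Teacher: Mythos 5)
Your treatment of parts 1--4 tracks the paper's (which refers back to Lemma~\ref{lem:bounds_ks} for items 1--3 and writes out only item 4, via the same growth estimate and H\"older interpolation with $\theta=(3-p)/(2p)$), so that part is fine.

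The gap is in your argument for the final sentence, specifically the claim that the exchange-correlation cross term is weakly continuous by tail control. To control $\int_{|x|>R}v_{xc}(\rho_k)\rho_{\chi_k}$ you only have the bound $c_{xc}\int_{|x|>R}(1+\rho_k^{p-1})\rho_{\chi_k}$, and the contribution $c_{xc}\int_{|x|>R}\rho_{\chi_k}$ does \emph{not} become small uniformly in $k$ as $R\to\infty$ when $\chi_k$ only converges weakly---that is precisely the quantity that stays of order one when mass escapes (and the same applies to $\int_{|x|>R}\rho_k^{p-1}\rho_{\chi_k}$). In fact the xc cross term is not jointly weak$\times$weak lower semicontinuous: take both $\phi^{(k)}$ and $\chi_k$ of the form $\tfrac{1}{\sqrt 2}u + \tfrac{1}{\sqrt 2}u(\cdot-k\mathbf e)$ for a fixed bump $u$, so that a bubble escapes to infinity simultaneously in $\rho_k$ and $\rho_{\chi_k}$; the escaping region contributes a fixed \emph{negative} amount to $\int v_{xc}(\rho_k)\rho_{\chi_k}$ which is lost in the weak limit, giving $\liminf_k\int v_{xc}(\rho_k)\rho_{\chi_k}<\int v_{xc}(\rho)\rho_\chi$. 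So ``continuity (not merely lsc)\ldots in the weak topology'' is false for this term, and your route does not establish the asserted weak$\times$weak$\times$weak${}^*$ lsc. Note that the paper itself gives no proof of the ``in particular'' sentence and, wherever it is invoked later, it is always under extra strength: in Lemma~\ref{lowersemicontinuity} the orbitals $\Phi$ are fixed and $\chi_n\to\chi$ strongly in $L^p$, and in Lemma~\ref{lem:compactness} lower semicontinuity is used only after $\Phi^{(\nu)}\to\Phi$ has been upgraded to strong $H^1$ convergence. Under either of these hypotheses the decomposition $v_{xc}(\rho)\in L^2+L^\infty_\varepsilon$ (available since $\rho\in L^1\cap L^3$ and $v_{xc}(0)=0$, $v_{xc}$ continuous with the stated growth) combined with a Nemytskii continuity argument for $\rho\mapsto v_{xc}(\rho)$ does give what is needed; your write-up should restrict the continuity/lsc claim to those settings rather than asserting it in the full weak topology.
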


\begin{proof}
Statements 1-3 follow by the same line of reasoning as in Lemma \ref{lem:bounds_excitation_energy} above. The fourth assertion follows by the same argument given in \cite{Friesecke-kniely}, but we include it for the sake of completeness.
By our assumption on the exchange-correlation potential $v_{xc}$ we have
\[
\intr v_{xc}(\rho) \rho_\psi 
\geq 
- c_{xc}\intr \big( 1 + \rho^{p-1}\big)\rho_\psi 
\geq - c_{xc} \left( ||\rho_\psi||_1 + \underbrace{||\rho^{p-1} ||_{p'}}_{ = ||\rho||_{p}^{p-1}} \cdot ||\rho_\psi||^{p}\right)
\quad 
\text{with } p' = \frac{p}{p-1}.
\]
The asserted bound now follows from the H\"older interpolation inequality \eqref{eq:hoelder-int} with $  q = 1, r = 3, \theta = \frac{3-p}{2p}$.
The continuity follows from the pointwise continuity of $\rho \mapsto v_{xc}(\rho)$ together with the bounds \eqref{eq:assumptions}.
\end{proof}

\begin{remark}\label{rem:assumptions}
Our existence results do not require $\Phi$ to be the KS ground state, but only to satisfy a fast enough decay property, like

\begin{equation} \label{eq:decay}
    \exists \gamma >0 ~~ s.t \quad e^{\gamma |\cdot|} \varphi_j \in H^1(\R^3), \quad \forall j \in \{1, \ldots, n\}.
\end{equation}
\end{remark}
By the results of \cite{cances}, whenever $Z \geq N=  2n$ and $e_{xc}$ satisfies our assumptions \eqref{eq:assumptions}, then there exists a KS ground state $\Phi = (\varphi_1, \ldots, \varphi_n)$
and any such ground state fulfills the additional property \eqref{eq:decay}.

\begin{theorem}[Existence of HOMO-LUMO excitations]\label{thm:lumo}
For any admissible nuclear charge distribution $\mu \in \mathcal{A}_{nuc}$ and any set of orbitals $\Phi = (\phi_1, \ldots, \phi_n) \in \mathcal{A}$ the excitation functional possesses a maximizer $\phi_H$ on $\mathcal{A}_{\Phi}^H$ (i.e., a HOMO).

If additionally we have $Z >N$, i.e. a positively charged system, and $\Phi \in \A$ satisfies the decay property \eqref{eq:decay}, then there exists also a minimizer $\phi_L$ on $\mathcal{A}_{\Phi}^L$ (i.e., a LUMO).
\end{theorem}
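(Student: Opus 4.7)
The HOMO assertion is immediate: $\A_\Phi^H$ is the unit sphere in the $n$-dimensional subspace $\mathrm{Span}\{\phi_1,\ldots,\phi_n\}$, hence compact, while $\E_{\mu,\rho}[\cdot]$ is a continuous quadratic form on $H^1(\R^3)$, so the maximum is attained. For the LUMO I plan to use the direct method. Given a minimizing sequence $(\chi_k) \subset \A_\Phi^L$ for $\E_{\mu,\rho}$, the bounds of Lemma \ref{lem:bounds_excitation_energy} combined with $\|\chi_k\|_2 = 1$ imply that $(\chi_k)$ is bounded in $H^1(\R^3)$. Extracting a subsequence, $\chi_k \rightharpoonup \chi_*$ weakly in $H^1$; weak $L^2$-convergence preserves $\langle \phi_i, \chi_* \rangle = 0$ for every $i$, and $\|\chi_*\|_2 \leq 1$. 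The whole task is to rule out mass loss, i.e.\ to establish $\|\chi_*\|_2 = 1$.

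The main obstacle, and the only place where the hypothesis $Z > N$ enters, is the strict binding inequality
$$
I := \inf_{\chi \in \A_\Phi^L} \E_{\mu,\rho}[\chi] \;<\; 0.
$$
The key point is that the effective potential $V := v_{ext} + v_H + v_{xc}(\rho)$ carries an attractive Coulomb tail $V(x) \sim -(Z-N)/|x|$ at infinity: $v_{ext}(x) \sim -Z/|x|$ by compactness of $\mathrm{supp}(\mu)$, the Hartree potential satisfies $v_H(x) \sim N/|x|$ since $\intr \rho = N$, and $v_{xc}(\rho(x)) \to 0$ because $\rho(x)\to 0$ and $v_{xc}(0)=0$. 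I would probe this tail with a bump $\xi_R(x) := R^{-3/2}\xi(x/R)$, where $\xi \in C_c^\infty(\R^3)$ is of unit $L^2$-norm and supported in the annulus $\{1 \leq |y| \leq 2\}$. A direct scaling computation gives $T[\xi_R] = O(R^{-2})$ and $\intr V |\xi_R|^2 = -(Z-N)\, c(\xi)/R + o(R^{-1})$ with $c(\xi):=\int |\xi(y)|^2/|y|\, \di y > 0$, so $\E_{\mu,\rho}[\xi_R] < 0$ for $R$ large. The exponential decay \eqref{eq:decay} of $\Phi$ makes $|\langle \phi_i, \xi_R \rangle| = O(R^{3/2} e^{-\gamma R})$, so Gram-Schmidt orthogonalization of $\xi_R$ against $\phi_1,\ldots,\phi_n$ followed by renormalization produces an admissible competitor retaining negative energy.

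To conclude, I would apply a Brezis-Lieb style decomposition. With $\eta_k := \chi_k - \chi_*$, the quadratic structure yields
$$
\E_{\mu,\rho}[\chi_k] = \E_{\mu,\rho}[\chi_*] + \E_{\mu,\rho}[\eta_k] + o(1).
$$
Since $\eta_k \rightharpoonup 0$ weakly in $H^1$ and each component of $V$ either vanishes at infinity or belongs to $L^2$ (via the splitting $1/|\cdot| = v_2 + v_\infty$ already used in Lemma \ref{lem:bounds_excitation_energy}), splitting $\intr V|\eta_k|^2$ across a large ball (where $L^p_{\mathrm{loc}}$-compactness applies) and its complement (where the potential is small) gives $\intr V|\eta_k|^2 \to 0$. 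Hence $\liminf \E_{\mu,\rho}[\eta_k] \geq 0$ and $I \geq \E_{\mu,\rho}[\chi_*]$. If $\chi_* = 0$, this forces $I \geq 0$, contradicting the binding inequality. If $0 < \|\chi_*\|_2 < 1$, then $\tilde\chi_* := \chi_*/\|\chi_*\|_2 \in \A_\Phi^L$ and homogeneity of the quadratic form yields $\E_{\mu,\rho}[\tilde\chi_*] = \E_{\mu,\rho}[\chi_*]/\|\chi_*\|_2^2 < \E_{\mu,\rho}[\chi_*] \leq I$, again contradicting the definition of $I$. Therefore $\|\chi_*\|_2 = 1$, so $\chi_* \in \A_\Phi^L$, and $\chi_*$ attains the infimum.
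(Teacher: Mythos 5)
Your HOMO argument coincides with the paper's (compactness of the unit sphere in a finite-dimensional space). For the LUMO, your proof is correct but takes a genuinely different route than the paper's. The paper sets up a full Lions-style concentration-compactness scheme: it embeds the problem in a one-parameter family $I_\lambda$, proves strict subadditivity $I_\lambda<I_\eta$ via homogeneity, and rules out \emph{dichotomy} by explicitly splitting a minimizing sequence as $\chi_n=\chi_n\xi_{R_n}+\chi_n\zeta_{R_n}$ with radii $R_n\to\infty$ chosen so that $\|\chi_n\xi_{R_n}\|=\alpha$, then showing $\E[\chi_n]\ge\E[\chi_{1,n}]+o(1)$. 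You instead exploit the fact that $\E_{\mu,\rho}$ is a \emph{quadratic form}: writing $\eta_k=\chi_k-\chi_*$, the cross terms die by weak convergence, yielding the Brezis--Lieb-type identity $\E_{\mu,\rho}[\chi_k]=\E_{\mu,\rho}[\chi_*]+\E_{\mu,\rho}[\eta_k]+o(1)$; combined with $\int V|\eta_k|^2\to 0$ (which uses only that $V$ vanishes at infinity and is locally in $L^p$, $p>3/2$) this gives $I\ge\E_{\mu,\rho}[\chi_*]$ directly, and the homogeneity trick $\E[\chi_*/\|\chi_*\|]=\E[\chi_*]/\|\chi_*\|^2$ immediately excludes partial mass loss. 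This is cleaner than the paper's cutoff machinery and avoids the auxiliary translation-invariant functional entirely, though it relies essentially on linearity of the underlying eigenvalue problem and would not generalize to a nonlinear energy (where the full dichotomy argument would be needed). Both proofs derive the strict binding inequality $I<0$ from the same physical input: the effective potential has the attractive tail $-(Z-N)/|x|$, probed by a bump at large distance whose overlap with the occupied orbitals is exponentially small by \eqref{eq:decay}; the paper's test function $\psi_\sigma(x)=\sigma^{3/2}\psi(\sigma(x-\sigma^{-1/2}\hat{\e}))$ and your dilation $\xi_R(x)=R^{-3/2}\xi(x/R)$ are parametrized differently but play the same role. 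One small stylistic caveat: your phrase ``each component of $V$ either vanishes at infinity or belongs to $L^2$'' is imprecise — $v_{ext}$ and $v_H$ individually do not vanish at infinity and are not globally $L^2$; only their sum has a decaying tail — but the substance of your large-ball/small-ball splitting is right and the argument goes through.
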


\begin{proof}
Existence of a HOMO is elementary since $\chi \mapsto \E_{\mu, \rho}[\chi]$ is continuous on $H^1(\R^3)$ (see Lemma \ref{lem:bounds_excitation_energy}) and the admissible set $A_{\Phi}^{H}$ is compact since it is a closed and bounded subset of a finite dimensional space.
The case of a LUMO is more difficult and will be treated in the next section.
\end{proof}

\subsection{Existence proof}

For a given $\Phi = (\phi_1, \ldots, \phi_n)$, define $\gamma_\Phi := \sum_{k=1}^{n} |\phi_k \rangle \langle \phi_k|$, the projection onto the span of the $\phi_k$. 
Then we define the translation invariant excitation functional by 
\begin{equation}
    \E[\chi] := 
    \begin{cases}
    \E_{\mu, \rho} \bigg[  \frac{(Id - \gamma_\Phi) \chi}{||(Id - \gamma_\Phi) \chi||}\bigg] = \frac{1}{||(Id - \gamma_\Phi) \chi||^2} \E_{\mu, \rho}[(Id - \gamma_\Phi) \chi],
    & \text{for } \chi \not \in \mathrm{span} \{\phi_1, \ldots, \phi_n \},  \\
    0 
    & \text{otherwise}.
    \end{cases}
\end{equation}

Note $\E$ depends on $\mu, \Phi$ but to keep notation simple and distinguish $\E$ from $\E_{\mu, \rho}$ we do not explicitly stress this via the notation. 

Now $\E$ is again defined over the whole $H^1(\R^3)$ and if we can prove the existence of a minimizer $\hat{\chi} $ with 
\[
\hat{\chi} := \text{argmin } \E[\chi], \quad  \text{subject to } \chi \in H^1(\R^3), \enskip ||(Id - \gamma_\Phi)\chi|| = 1,
\]
then we have the existence of a LUMO given by $\phi_L := (Id - \gamma_\Phi)\hat{\chi}$.

As usual in a concentration compactness argument  we embed our problem in a one-parameter family of minimization problems
\begin{equation}\label{minimization_projection}
I_\lambda := \inf \limits_{\chi \in K_\lambda} \E[\chi], \qquad K_\lambda := \{\psi \in  H^1(\R^3) : ||(Id-\gamma_\Phi)\psi||_{2}^2 = \lambda \}
\end{equation}
parametrized by $\lambda \in \R_+$.

We can now state the first central Lemma of this section:

\begin{lem}[Fundamental properties of the infimum]\label{lem:fundamental_properties_infimum}
The map given by $\lambda \mapsto I_\lambda$ has the following properties
\begin{enumerate}[(i)]
    \item $I_0 =  0$ and $-\infty < I_\lambda < 0$ for every $\lambda >0$,
    \item $\lambda \mapsto I_\lambda$ is continuous and strictly decreasing, i.e. for $0<\eta<\lambda$ the inequality $I_\lambda < I_\eta $ holds
\end{enumerate}
\end{lem}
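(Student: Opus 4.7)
The plan is to exploit the quadratic homogeneity of $\E_{\mu,\rho}$ to reduce $I_\lambda$ to a Rayleigh-quotient minimization over unit vectors orthogonal to $\mathrm{span}\{\phi_1,\ldots,\phi_n\}$, and then to produce a trial state of strictly negative energy using the attractive Coulomb tail $-(Z-N)/|x|$ that the effective KS potential acquires when $Z>N$. Writing any $\chi \in K_\lambda$ as $\chi = \sqrt{\lambda}\psi + \gamma_\Phi \chi$ with a unit vector $\psi\perp \phi_1,\ldots,\phi_n$ and using $\E_{\mu,\rho}[\cdot]=\langle\cdot,h_{\mu,\rho}\cdot\rangle$, one obtains $\E_{\mu,\rho}[(Id-\gamma_\Phi)\chi]=\lambda\,\E_{\mu,\rho}[\psi]$; this identity drives both (i) and (ii).

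For (i), the equality $I_0=0$ is built into the definition since $K_0=\mathrm{span}\{\phi_1,\ldots,\phi_n\}$ and $\E$ vanishes there by the second branch. For the lower bound $I_\lambda>-\infty$, I would apply Lemma \ref{lem:bounds_excitation_energy} to the unit vector $\psi$: summing the bounds of items 1--4 (and using $\|\rho_\psi\|_1=1$) yields
\[
\E_{\mu,\rho}[\psi]\;\geq\;\tfrac12 T[\psi]+\tfrac1c\|\rho_\psi\|_3 - C\bigl(1+\|\rho_\psi\|_3^{3/4}+\|\rho_\psi\|_3^{3(p-1)/(2p)}\bigr);
\]
both polynomial exponents lie strictly below $1$ (the second because $p<5/3$), so they are Young-dominated by the kinetic term, producing a uniform bound $\E_{\mu,\rho}[\psi]\geq -C_*$ and hence $I_\lambda\geq -C_*\lambda>-\infty$.

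The decisive step is $I_\lambda<0$, which is where $Z>N$ enters. A multipole expansion using $\int \mathrm{d}\mu=Z$, $\int\rho=N$, the compact support of $\mu$, and the exponential decay of $\rho$ inherited from \eqref{eq:decay} gives
\[
v_{eff}(x)=v_{ext}(x)+v_H(x)+v_{xc}(\rho(x))=-\frac{Z-N}{|x|}+r(x),
\]
with $r(x)=O(|x|^{-2})$ at infinity (and $r\in L^1_{loc}$), while $v_{xc}(\rho)$ decays as $\rho$ does. I would test with the hydrogenic state $\eta_\kappa(x):=(\kappa^3/\pi)^{1/2}e^{-\kappa|x|}$ for small $\kappa>0$. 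The standard identities $\int|\nabla\eta_\kappa|^2=\kappa^2$ and $\langle\eta_\kappa,|\cdot|^{-1}\eta_\kappa\rangle=\kappa$ give
\[
\E_{\mu,\rho}[\eta_\kappa]=\tfrac{\kappa^2}{2}-(Z-N)\kappa+\int (r+v_{xc}(\rho))|\eta_\kappa|^2.
\]
Splitting the last integral at $|x|\leq R_0$ versus $|x|>R_0$ and using $|\eta_\kappa|^2\leq\kappa^3/\pi$ together with $|r|\leq C/|x|^2$ at infinity yields an $O(\kappa^2)$ correction; hence $\E_{\mu,\rho}[\eta_\kappa]=-(Z-N)\kappa+O(\kappa^2)<0$ for small $\kappa$. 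To convert $\eta_\kappa$ into an admissible competitor I would set $\tilde\eta_\kappa:=(Id-\gamma_\Phi)\eta_\kappa$ and rescale it to norm $\sqrt\lambda$; the exponential decay \eqref{eq:decay} of the $\phi_j$ forces $\langle\phi_j,\eta_\kappa\rangle=O(\kappa^{3/2})$, so the orthogonalization and renormalization perturb the energy only by $O(\kappa^{3/2})=o(\kappa)$ and negativity survives.

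Property (ii) is then essentially free: the scaling relation from the first paragraph gives $I_\lambda=\lambda\,I_1$ with $I_1\in(-\infty,0)$ by (i), hence $\lambda\mapsto I_\lambda$ is linear (in particular continuous) and, for $0<\eta<\lambda$, $I_\lambda-I_\eta=(\lambda-\eta)I_1<0$. The main obstacle throughout is the trial-state estimate of the third paragraph: the leading Coulomb tail $-(Z-N)/|x|$ of $v_{eff}$ must be matched to the hydrogenic scale $\kappa$ so that both the kinetic contribution $\kappa^2/2$ and the short-range corrections from $r$, $v_{xc}(\rho)$, and the orthogonalization against $\Phi$ remain strictly subdominant, which is precisely what forces the assumption $Z>N$.
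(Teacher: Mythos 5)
Your proof is correct in essence and uses the same structural skeleton as the paper — quadratic homogeneity giving $I_\lambda=\lambda I_1$, the lower bound from Lemma~\ref{lem:bounds_excitation_energy} via Young's inequality, and a trial-state argument to force $I_1<0$ — but the trial state itself is genuinely different. The paper rescales a radial bump supported \emph{away} from the origin and translates it out to distance $\sim\sigma^{-1/2}$, then uses Newton's theorem to reduce the Hartree and nuclear integrals to a common form; you instead test with a hydrogenic state $\eta_\kappa=(\kappa^3/\pi)^{1/2}e^{-\kappa|\cdot|}$ centered at the origin and read off $\E_{\mu,\rho}[\eta_\kappa]=\tfrac{\kappa^2}{2}-(Z-N)\kappa+o(\kappa)$ from a multipole expansion of $v_{ext}+v_H$. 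Your route is arguably cleaner: the net attractive tail $-(Z-N)/|x|$ is made explicit, and the paper's translation trick (used there only to make $\langle\phi_k,\psi_\sigma\rangle$ exponentially small rather than $O(\sigma^{3/2})$) is replaced by the simple observation that $\langle\phi_j,\eta_\kappa\rangle=O(\kappa^{3/2})=o(\kappa^{1/2})$ suffices for the orthogonalization error to be $o(\kappa)$.

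One point needs repair. You assert that $v_{xc}(\rho)$ ``decays as $\rho$ does'' and fold the $\int v_{xc}(\rho)|\eta_\kappa|^2$ term into the $O(\kappa^2)$ remainder. Under the general hypotheses \eqref{eq:assumptions} the only information available near $\rho=0$ is $v_{xc}$ continuous with $v_{xc}(0)=0$ and $v_{xc}\le 0$; there is no rate. A slowly vanishing $v_{xc}$ (e.g.\ $v_{xc}(\rho)\sim -1/\log(1/\rho)$) composed with an exponentially decaying $\rho$ gives only $v_{xc}(\rho(x))\sim -c/|x|$, which would contribute $O(\kappa)$ rather than $O(\kappa^2)$. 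The correct (and simpler) fix is the one the paper uses: since $v_{xc}\le 0$, discard the exchange–correlation term in the upper bound, so that only $r_1:=v_{ext}+v_H+(Z-N)/|\cdot|=O(|x|^{-2})$ at infinity and $r_1\in L^1_{loc}$ need to be controlled, giving the advertised $O(\kappa^2)$ remainder. (Related minor slip: if $r$ already includes $v_{xc}(\rho)$ in your decomposition of $v_{eff}$, then writing $\int(r+v_{xc}(\rho))|\eta_\kappa|^2$ double-counts it.) With this correction the argument is complete.
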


\begin{proof}
The fact that $I_0 =0$ is clear and since $(Id- \gamma_\Phi)$ is a bounded operator from $H^1$ into itself $I_\lambda > -\infty$ follows by the bounds in Lemma \ref{lem:bounds_excitation_energy}.

Usually the strict subadditivity condition is the hardest part in a concentration compactness proof, but fortunately here it is quite easy.

The only thing we have to prove is actually $I_\lambda <0$ for $\lambda >0$. Since by the structure of our problem we have $\theta I_\eta = I_{\theta \eta}$ for all $\theta, \eta \geq 0$. Hence, for $\lambda > \eta$ we choose $\theta >1$ such that $\theta \eta = \lambda$ and obtain
\[
I_\lambda = I_{\theta \eta} = \theta I_\eta < I_\eta .
\]
Also continuity is clear since $\lambda \mapsto I_\lambda$ is just a linear function.
The inequality $I_\lambda <0$ is also important because otherwise we would be considering the constant zero-function.

$\boxed{I_\lambda <0}$
Take a radially symmetric testfunction $\psi \in C^{\infty}_c(\R^3)$ with supp$\psi \subseteq B_K^c$ for some radius $K >0$, $||\psi|| = 1$. Here $B_K^c$ denotes the complement of the ball $B_K$ around the origin of radius $K>0$.
Then define 
\[
\psi_\sigma (x) = \sigma^{\nicefrac{3}{2}} \psi \big( \sigma (x-\tfrac{1}{\sqrt{\sigma}} \hat{\e})\big), \qquad \text{with }\hat{\e} \text{ some unitvector in } \R^3, \sigma >0.
\]
Note that 
\begin{align*}
    \langle \phi_k, \psi_\sigma \rangle 
    =
    \intr \phi_k(x) \sigma^{\nicefrac{3}{2}} \psi \big( \sigma (x-\tfrac{1}{\sqrt{\sigma}} \hat{\e})\big) \di x
    =
    \sigma^{- \nicefrac{3}{2}} \int_{B_K^c} \psi(y) \phi_k\big(\tfrac{y}{\sigma} + \tfrac{1}{\sqrt{\sigma}} \eh \big) \di y = O(\sigma^{- \nicefrac{3}{2}} \exp(- \tfrac{c}{\sigma})),
\end{align*}
where we used in the second to last equality that supp$\psi \subseteq B_K^c$ and in the last step the exponential decay of the KS-orbitals together with the estimate
\[
\big| \tfrac{y}{\sigma} + \tfrac{1}{\sqrt{\sigma}} \eh \big| 
\geq
\left| \big| \tfrac{y}{\sigma}\big| -  \big| \tfrac{1}{\sqrt{\sigma}} \eh \big| \right|
\geq \tfrac{c}{\sigma}, \quad \text{for } y \in B_K^c \text{ and $\sigma$ small enough and some constant $c>0$. } 
\]
Hence $ \langle \phi_k, \psi_\sigma \rangle $ decays exponentially for $\sigma \to 0$, so it can be neglected up to higher orders, i.e.
\[
\big| (Id- \gamma_\Phi) \psi_\sigma\big|^2 =  |\psi_\sigma|^2 + O(\exp(-\tfrac{c}{\sigma}) )
\quad \text{and} \quad
\big| \nabla (Id- \gamma_\Phi) \psi_\sigma\big|^2 =  |\nabla \psi_\sigma|^2 + O(\exp(-\tfrac{c}{\sigma})) .
\]

With this at hand we can estimate the energy of $\psi_\sigma$
\begin{align*}
    \E[\psi_\sigma]
    =
    \frac{1}{2} \intr \big| \nabla \psi_\sigma(x) \big|^2 \di x
    +
    \intr \big( v_{xc} + v_{ext} + v_{H} \big) |\psi_\sigma (x)|^2 \di x
    + O(\exp(-\tfrac{c}{\sigma})).
\end{align*}

Since we want to estimate the energy from above we do not need to consider the exchange-correlation term since it only gives a negative contribution.
The kinetic energy is easily estimated 
\[
2 T[\psi_\sigma] 
=
\intr 
|\nabla \psi_\sigma|^2 \di x
=
\sigma^2 \intr |\nabla \psi|^2 \di x
= O(\sigma^2).
\]

Our next step will be to estimate the Hartree and the nuclei term and bring them in a similar form
\begin{align*}
    \intr v_H |\psi_\sigma|^2 \di x
    &=
    \intr \intr \frac{\rho(x) |\psi_\sigma(y)|^2}{|x-y|} \di x \di y
    =
    \intr \intr \frac{\rho(x) |\psi(z)|^2}{\big|x-(\tfrac{z}{\sigma} + \tfrac{1}{\sqrt{\sigma}} \eh )\big|} \di x \di z \\
    &=
    \sigma \intr \rho(x) \intr \frac{|\psi(z)|^2}{\max \{ \sigma |x|, |z + \sqrt{\sigma} \eh | \}} \di z \di x,
\end{align*}
where we used the radial symmetry of $\psi$ in the last step.

Exactly the same steps transform the external potential term into
\[
\intr  v_{ext} |\psi_\sigma|^2 \di x 
=
 -\sigma  \intr \di \mu(x) \intr \frac{|\psi(z)|^2}{\max \{ \sigma |x|, |z + \sqrt{\sigma} \eh | \}} \di z.
\]

Since the support of $\mu$ is compact and again due to supp$\psi \subseteq B_K^c$, we have for $\sigma$ small enough  in the last integral $\max \{ \sigma |x|, |z + \sqrt{\sigma} \eh | \} = |z + \sqrt{\sigma} \eh | $.

Putting everything together we obtain for $\sigma$ small enough
\begin{align*}
I_{||(Id - \gamma_\Phi)\psi_\sigma||}
&\leq 
\E[\psi_\sigma]
\leq 
\intr \big( v_{H} + v_{ext}\big)| \psi_\sigma|^2 \di x + O(\sigma^2) \\
&=
\sigma 
\left(- Z \intr \frac{|\psi(z)|^2}{ |z + \sqrt{\sigma} \eh | } \di z +  \intr \rho(x) \intr \frac{|\psi(z)|^2}{\max \{ \sigma |x|, |z + \sqrt{\sigma} \eh | \}} \di z \di x \right) + O(\sigma^2)\\
&\leq 
\sigma (2n - Z) \intr \frac{|\psi(z)|^2}{ |z + \sqrt{\sigma} \eh | } \di z + O(\sigma^2) < 0.
\end{align*}
In this last inequality the assumption of a positively charged system $Z>N$ is crucial.

\end{proof}

Our last Lemma is just a standard lower semicontinuity result.

\begin{lem}\label{lowersemicontinuity}
Let $(\chi_n)_{n \geq 1}$ be bounded in $H^1(\R^3)$ converging weakly to $\chi$. If $||\chi||_2 = \limit{n}{\infty} ||\chi_n||_2,$ then $\chi_n \to \chi $ strongly in $L^p(\R^3)$ for $p \in [2,6)$ and furthermore
\[
\E[\chi] \leq \liminf \limits_{n \to \infty} \E[\chi_n].
\]
\end{lem}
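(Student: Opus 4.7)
The plan is to split the result into two assertions and handle them in turn. For the strong $L^p$ convergence, I would invoke the Radon--Riesz argument plus interpolation. Weak convergence in $H^1(\R^3)$ implies weak convergence in $L^2(\R^3)$, so expanding
\[
\|\chi_n - \chi\|_2^2 = \|\chi_n\|_2^2 - 2\langle\chi_n, \chi\rangle + \|\chi\|_2^2
\]
and using $\langle \chi_n, \chi\rangle \to \|\chi\|_2^2$ together with the $L^2$-norm hypothesis yields $\chi_n \to \chi$ strongly in $L^2$. Since the Sobolev embedding $H^1 \hookrightarrow L^6$ gives a uniform bound on $\|\chi_n - \chi\|_6$, H\"older interpolation
\[
\|\chi_n - \chi\|_p \leq \|\chi_n - \chi\|_2^{\theta(p)} \, \|\chi_n - \chi\|_6^{1-\theta(p)}
\]
then delivers strong convergence in every $L^p$ with $p \in [2, 6)$.

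For the lower semicontinuity, I would reduce the assertion to Lemma \ref{lem:bounds_excitation_energy}. Set $\psi_n := (\mathrm{Id} - \gamma_\Phi)\chi_n$ and $\psi := (\mathrm{Id} - \gamma_\Phi)\chi$. Because $\gamma_\Phi = \sum_{k=1}^n |\phi_k\rangle\langle \phi_k|$ is a finite-rank projection onto $H^1$ functions, each scalar $\langle \phi_k, \chi_n\rangle$ converges and $\gamma_\Phi \chi_n \to \gamma_\Phi \chi$ strongly in $H^1$. Hence $\psi_n \rightharpoonup \psi$ weakly in $H^1$, and by Pythagoras $\|\psi_n\|_2^2 = \|\chi_n\|_2^2 - \|\gamma_\Phi \chi_n\|_2^2 \to \|\psi\|_2^2$. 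The weak-weak-weak$^*$ lower semicontinuity clause in Lemma \ref{lem:bounds_excitation_energy}, applied with $\Phi$ and $\mu$ held fixed, then gives $\liminf_n \E_{\mu, \rho}[\psi_n] \geq \E_{\mu, \rho}[\psi]$. In the generic case $\psi \neq 0$, the denominators $\|\psi_n\|_2^2 \to \|\psi\|_2^2 > 0$ allow us to divide,
\[
\liminf_{n\to\infty} \E[\chi_n] = \liminf_{n\to\infty} \frac{\E_{\mu, \rho}[\psi_n]}{\|\psi_n\|_2^2} \geq \frac{\E_{\mu, \rho}[\psi]}{\|\psi\|_2^2} = \E[\chi],
\]
completing the argument.

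The main obstacle I anticipate is the degenerate case $\psi = 0$, i.e.\ $\chi \in \mathrm{span}\{\phi_1, \ldots, \phi_n\}$, where the definition declares $\E[\chi] = 0$ but $\|\psi_n\|_2^2 \to 0$, so the preceding quotient manipulation breaks down. I would exploit the scale-invariance $\E[\chi_n] = \E_{\mu, \rho}[\psi_n / \|\psi_n\|_2]$ and study the $L^2$-normalized sequence $\tilde\psi_n := \psi_n/\|\psi_n\|_2$ directly: the external, Hartree, and exchange--correlation potentials all decay at infinity, so the $L^p$-continuity bounds of Lemma \ref{lem:bounds_excitation_energy} force their integrals against a weakly-null normalized sequence to have nonnegative liminf, while the kinetic term is nonnegative; this yields $\liminf_n \E[\chi_n] \geq 0 = \E[\chi]$. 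I would also note that in the concentration--compactness application where this lemma is used on minimizing sequences in $K_\lambda$ with $\lambda > 0$, this degenerate case is automatically excluded by the constraint $\|(\mathrm{Id} - \gamma_\Phi)\chi\|_2^2 = \lambda > 0$, so the argument suffices for all intended applications.
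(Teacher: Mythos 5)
Your proof takes essentially the same route as the paper's. For the strong $L^p$ convergence you use the Radon--Riesz argument plus interpolation against the Sobolev bound, exactly as the paper does. For the lower semicontinuity the paper simply says ``use the weak lower semicontinuity of $\E_{\mu,\rho}$ from Lemma \ref{lem:bounds_excitation_energy} and the continuity of $Id-\gamma_\Phi$''; you fill in the missing step, namely that $\gamma_\Phi\chi_n\to\gamma_\Phi\chi$ strongly in $H^1$ (finite rank with $H^1$ range), so $\psi_n:=(Id-\gamma_\Phi)\chi_n\rightharpoonup\psi$ weakly in $H^1$ with $\|\psi_n\|_2\to\|\psi\|_2$, and then you divide by the converging, nonzero normalizations. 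This is the correct and complete version of what the paper only sketches.

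One caution about your ``degenerate case'' paragraph: the fix you propose does not actually work, and the lemma as stated is in fact \emph{false} when $\psi=(Id-\gamma_\Phi)\chi=0$, i.e.\ when $\chi\in\mathrm{span}\{\phi_1,\ldots,\phi_n\}$. For example if a LUMO $\phi_L$ with $\epsilon_L=\E_{\mu,\rho}[\phi_L]<0$ exists (as in the $Z>N$ setting), take $\chi_n=\phi_1+\tfrac1n\phi_L$. Then $\chi_n\to\phi_1$ strongly, $\|\chi_n\|_2\to\|\phi_1\|_2$, yet $\E[\chi_n]=\E_{\mu,\rho}[\phi_L]=\epsilon_L<0=\E[\phi_1]$. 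Moreover your argument that $\tilde\psi_n:=\psi_n/\|\psi_n\|_2$ is ``weakly null'' is not justified: $\|\tilde\psi_n\|_2=1$ so it is not a null sequence, and since $\|\psi_n\|_2\to0$ while $\|\psi_n\|_{H^1}$ need not vanish, $\tilde\psi_n$ need not even be bounded in $H^1$. The correct observation is the one you make at the very end: the lemma is only ever invoked on sequences with $\|(Id-\gamma_\Phi)\chi_n\|_2^2=\lambda>0$ and weak limits $\chi$ already known to lie in the orthogonal complement of the occupied span, so $\psi=\chi\neq0$ and the degenerate case never arises. It would be cleaner to simply add the hypothesis $\chi\notin\mathrm{span}\{\phi_1,\ldots,\phi_n\}$ (or $(Id-\gamma_\Phi)\chi\neq0$) to the lemma statement rather than attempt to prove something that is false in general.
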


\begin{proof}
The weak convergence in $H^1$ together with the convergence of the $L^2$-norms gives strong convergence in $L^2$ which in turn gives strong convergence in $L^p$, $p\in[2,6)$ due to H\"{o}lder interpolation inequality and the fact that $(\chi_n)_{n\geq1}$ is bounded in $L^q$ for  $q\in[2,6]$ by Sobolev embedding.

The inequality follows by weak lower semicontinuity of $\chi \mapsto \E_{\mu, \rho}[\chi]$ established in Lemma \ref{lem:bounds_excitation_energy} and the fact that $ (Id-\gamma_\Phi)$ is a continuous operator from $H^1$ into itself. 
\end{proof}

With all these Lemmata at our disposal, we are now able to prove the existence of a minimizer to \eqref{minimization_projection} and hence the existence of a LUMO.

Take a minimizing sequence $(\chi_n)_{n}$ for $I_\lambda$. Then w.l.o.g. $\chi_n = (Id-\gamma_\Phi){\chi}_n$, i.e. $\chi_n$ lies in the orthorgonal complement of  $\mathrm{span}\{\phi_1,\ldots,\phi_n\}$. 
Therefore the sequence is bounded in $H^1(\R^3)$ by Lemma \ref{lem:bounds_excitation_energy} and hence (up to taking a subsequence) $\chi_n \rightharpoonup \chi$ weakly in $H^1$ for some $\chi \in H^1$.

If $||\chi||_2 = \limit{n}{\infty} ||\chi_n||_2$, then by Lemma \ref{lowersemicontinuity} we get
\[ \E[\chi] \leq \liminf \limits_{n\to\infty} \E[\chi_n] = I_\lambda,\]
and since $\chi$ also lies in the (convex and closed, hence weakly closed) orthorgonal complement of $\mathrm{Span}\{ \phi_1, \ldots, \phi_n\}$ we have $||(Id-\gamma_\Phi)\chi|| = ||\chi|| = \lambda$ so $\chi$ is a minimizer.

The rest of the proof consists in ruling out $||\chi|| < \limit{n}{\infty} ||\chi_n|| =: \Lambda$. 

For this purpose, assume first $||\chi|| =0$. Then $\chi_n \to 0$ and $(Id- \gamma_\Phi)\chi_n \to 0$ in $L^{p}_{loc}(\R^3)$ for $p \in [2,6)$. 
In particular,
\[ 
I_\lambda 
= \limn \E[\chi_n] 
\geq
\liminf\limits_{n\to \infty} 
\int_{B_K} \big( v_{xc} + v_{ext}\big) |(Id- \gamma_\Phi)\chi_n |^2 + \int_{B_K^c} \big( v_{xc} + v_{ext}\big) |(Id- \gamma_\Phi)\chi_n |^2 ,
\]
for any fixed $K>0$.

The first term converges to 0 and the second one can be made arbitrarily small because $v_{ext}$ vanishes at infinity and $v_{xc}(\rho)$ also vanishes since $\rho$ decays exponentially. 

So in conclusion we would get $I_\lambda \geq 0$, a contradiction to Lemma \ref{lem:fundamental_properties_infimum}.

So we only have to rule out $\alpha := ||\chi|| \in (0,\Lambda) $. Here we use a standard argument. Following e.g. \cite{frank2007muller}, define a quadratic partition of unity $\xi^2 + \zeta^2 =1$ with $\xi$ being smooth, radially non-increasing and with 
\[
\xi(0) = 1, ~ 0\leq \xi < 1 \text{ for } |x| \neq 0, ~ \xi = 0 \text{ for } |x| \geq 1, ~||\nabla \xi||_{\infty} \leq 2, ~ ||\nabla \zeta||_{\infty} \leq 2. 
\]
Define $\xi_R := \xi(\tfrac{\cdot}{R})$. Then for all $n \in \N,$ the map $R \mapsto ||\chi_n\xi_R||$ is continuous, non-decreaing, vanishes for $R\to 0$, and converges to $\lambda$ for $R \to \infty$.

Hence we can choose a sequence $(R_n)_{n \geq 1}$ such that $||\chi_n \xi_{R_n}|| = \alpha$.
Furthermore we claim that this sequence goes to infinity as $R_n \to \infty$. 

Assume for contradiction that this were not the case. Then up to a subsequence we have $R_n \to R_*$ for some $R_* \geq 0$. But this would give us
\[
\intr |\chi|^2 \xi^2_{R_*} \di x = \limn \intr |\chi_n|^2 \xi^2_{R_n} \di x = \alpha = \intr |\chi|^2 \di x,
\]
which is not possible since $\xi^2_{R_*} <1 $ for $x \neq 0$. Hence $R_n \to \infty$.

Next we define $\chi_{1,n} := \chi_n \xi_{R_n}$ and $\chi_{2,n} := \chi_n \zeta_{R_n} $ such that $|\chi|^2 = |\chi_{1,n}|^2 + |\chi_{2,n}|^2$ and (by definition of the $R_n$) $||\chi_{1,n}|| = \alpha, $ $||\chi_{2,n}|| = ||\chi_n||^2 - \alpha$.

Note that those two sequences are also bounded in $H^1(\R^3)$, since $\chi_n$ is bounded in $H^1$ and $\xi_{R_n}$ belongs to $W^{1,\infty}$.

Furthermore the cutoff gives us that $(\chi_{1,n})_n$ converges weakly to $\chi$ and hence as a by-product $\chi_{2,n}$ converges to 0 in $L^p_{loc}$. In order to see this, we take a $\Psi \in C^{\infty}_{c}$ and consider
\[
\langle \Psi, \chi_{1,n} \rangle 
=
\intr \Psi \xi_{R_n} \chi_n \di x 
=
\intr \Psi \xi_{R_n} \di x 
+
\intr \Psi \xi_{R_n} ( \chi_n -1) \di x
\gegen{n}{\infty}
\intr \Psi \chi \di x,
\]
where we used that $\chi_n$ converges weakly to $\chi$ and the fact that $\xi_{R_n}$ converges pointwise to 1 and stays uniformly bounded. 
So we proved weak convergence on a dense subset and by boundedness we conclude.

Now we can apply Lemma \ref{lowersemicontinuity} to the sequence $\chi_{1,n}$ since by construction $||\chi_{1,n}|| = \alpha = ||\chi||$, and obtain $\E[\chi] \leq \liminf \limits_{n \to \infty} \E[\chi_{1,n}]$ and $\chi_{1,n} \to \chi$ in $L^{p}$ for $p \in [2,6)$.

Our next step is proving 
\[
\E[\chi_n] 
\geq \E[\chi_{1,n}] + o(1),
\]
where $o(1)$ is the usual Landau notation meaning terms which vanish for $n \to \infty$.

First note that $\chi_{1,n} \rightharpoonup \chi$ and also $\langle \phi_k, \chi \rangle =0 $, hence
\[
\langle \phi_k, \chi_{1,n} \rangle = \langle \phi_k, \chi \rangle + o(1) = o(1).
\]
Even more, remembering $R_n \to \infty$ for $n \to \infty$, we have $|\nabla \xi_{R_n}| \leq \tfrac{c}{R_n} = o(1)$. 
Therefore we obtain
\[
\intr |\nabla \chi_{1,n} |^2 \di x 
=
\intr \big| \xi_{R_n} \nabla \chi_{n}
+
\chi_n \underbrace{\nabla \xi_{R_n} }_{= o(1)}\big|^2 \di x 
=
\intr \xi^{2}_{R_n} |\nabla \chi_{n}|^2 \di x + o(1)
\leq 
\intr |\nabla \chi_n|^2 \di x+ o(1).
\]
This already gives us $T[\chi_{1.n}] \leq T[\chi_{n}] + o(1)$.

For the potential energy we start by proving $V[\chi_n] = V[\chi_{1,n}] + V[\chi_{2,n}] + o(1)$.

Due to the fact that $(\chi_{2,n})_n$ is bounded and converges pointwise to zero we have $\chi_{2,n} \rightharpoonup 0$ in $H^1(\R^3)$ and hence $ \langle \phi_k, \chi_{2,n} \rangle = o(1)$, which gives us $(Id-\gamma_\Phi)\chi_{2,n} = \chi_{2,n} + o(1)$. Consequently 
\begin{align*}
V[\chi_{1,n}] + V[\chi_{2,n}]
&= 
\intr \big( v_{ext} + v_{H} + v_{xc} \big) 
\left( \big| (Id - \gamma_\Phi)\chi_{1,n}\big|^2
+ \big| (Id - \gamma_\Phi)\chi_{2,n}\big|^2 \right) \\
&=
\intr \big( v_{ext} + v_{H} + v_{xc} \big) |\chi_n|^2
+o(1).
\end{align*}
So up until now we have
\[
\E[\chi_n] 
= T[\chi_n] + V[\chi_n]
\geq 
T[\chi_{1,n}]
+V[\chi_{1,n}] + V[\chi_{2,n}] + o(1)
= \E[\chi_{1,n}] + V[\chi_{2,n}] + o(1).
\]
The only fact that remains to be proven is $V[\chi_{2,n}] \geq  o(1)$. This follows from the standard argument that the only negative contributions come from $v_{ext}$ and $v_{xc}$, i.e.
\[
\liminf \limits_{n \to \infty} 
V[\chi_{2,n}]
\geq 
\liminf \limits_{n \to \infty} 
\int_{B_K} \big( v_{xc} + v_{ext}\big) |\chi_n |^2 + \int_{B_K^c} \big( v_{xc} + v_{ext}\big) |\chi_n |^2, 
\]
where the first term vanishes because $ \chi_{2,n} \to 0$ in $L^{p}_{loc}$, and the second one also tends to $0$ because $K>0$ was arbitrary and $v_{xc}$ and $v_{ext}$ decay at infinity.

In conclusion we proved $\E[\chi_n] \geq \E[\chi_{1,n}] + o(1)$ which gives us after passing to the limit
\[
I_{\lambda}
=
\limn \E[\chi_n]
\geq 
\liminf \limits_{n \to \infty}
\E[\chi_{1,n}]
\geq 
\E[\chi]
\geq I_\alpha,
\]
 where we used in the last inequality that $\chi$ lies in the orthogonal complement of the occupied orbitals and hence $\alpha = ||\chi|| = ||(Id - \gamma_\Phi)\chi||$ so $\chi \in K_\alpha$.
 
 But since we assumed $\alpha < \lambda$, this contradicts statement $(ii)$ in Lemma \ref{lem:fundamental_properties_infimum}.
This completes the proof of Theorem \ref{thm:lumo}.

\subsection{Higher excitations}
We now complete the rigorous justification of the picture on the left-hand side of Figure \ref{fig:spectra}, i.e. we fully characterize the spectrum of the KS Hamiltonian in the positively charged case.

\begin{theorem}[Spectrum of KS Hamiltonian in the case $Z>N$] \label{thm:neutral_case}
Consider $Z>N$, i.e. a positively charged system. Then the KS Hamiltonian $h_{\mu, \rho}$ given in \eqref{eq:ks-canonical} has infinitely many negative eigenvalues of finite multiplicity below the bottom of its essential spectrum,
\begin{equation}
    \sigma \big( h_{\mu, \rho}\big) = \{\lambda_n\}_{n \geq 1} \cup [0, \infty), \quad 
    \text{ with }
    \lambda_n <0 \text{ and } \lambda_n \gegen{n}{ \infty} 0.
\end{equation}
\end{theorem}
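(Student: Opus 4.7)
The plan is to split $h_{\mu, \rho} = -\tfrac{1}{2}\Delta + V_{eff}$ with $V_{eff} := v_{ext} + v_H + v_{xc}(\rho)$, and to combine two ingredients. First, $V_{eff}$ is a relatively compact perturbation of $-\Delta$ that vanishes at infinity, so Weyl's theorem on stability of the essential spectrum will fix $\sigma_{ess}(h_{\mu, \rho}) = [0, \infty)$ and make every eigenvalue below $0$ isolated of finite multiplicity. Second, $V_{eff}$ carries an attractive Coulombic tail $-(Z - N)/|x|$ at infinity, and I will exploit this via the min-max principle applied to test functions supported in disjoint annuli at infinity to exhibit infinitely many negative eigenvalues accumulating at $0$.

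For the essential spectrum, I would decompose $V_{eff} = V_1 + V_2$ where $V_1$ captures the local Coulomb singularity of $v_{ext}$ cut off to a compact neighborhood of $\Omega_{nuc}$ (hence lies in $L^2(\R^3)$), and $V_2$ is the continuous remainder. The far field of $v_{ext}$ is $-Z/|x|$ and vanishes at infinity; by the exponential decay \eqref{eq:decay}, $\rho \in L^1 \cap L^\infty$ has finite first moment, so $v_H = \tfrac{1}{|\cdot|} \ast \rho$ is continuous, bounded and vanishes at infinity; by continuity of $v_{xc}$ together with $v_{xc}(0) = 0$ and $\rho(x) \to 0$, also $v_{xc}(\rho(\cdot)) \to 0$ at infinity. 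Hence $V_2 \in C_0(\R^3)$, and standard arguments (using Sobolev $H^1 \hookrightarrow L^6$ and Rellich-Kondrachov) show that both $V_1(-\Delta + 1)^{-1}$ and $V_2(-\Delta + 1)^{-1}$ are compact on $L^2(\R^3)$. Weyl's theorem then yields $\sigma_{ess}(h_{\mu, \rho}) = \sigma_{ess}(-\tfrac{1}{2}\Delta) = [0, \infty)$, while form-smallness of $V_{eff}$ ensures $h_{\mu, \rho}$ is bounded below.

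To sharpen the decay of $V_{eff}$, I would perform a multipole expansion using compactness of $\Omega_{nuc}$ and the elementary bound $\big|\tfrac{1}{|x-y|} - \tfrac{1}{|x|}\big| \leq \tfrac{|y|}{|x|\,|x-y|}$, giving for $|x|$ large
\[
v_{ext}(x) = -\tfrac{Z}{|x|} + O(|x|^{-2}), \qquad v_H(x) = \tfrac{N}{|x|} + O(|x|^{-2}),
\]
the Hartree estimate using the finite first moment of $\rho$. Combined with $v_{xc}(\rho) \leq 0$, this yields $V_{eff}(x) \leq -(Z - N - \delta)/|x|$ on $\{|x| \geq R_\delta\}$ for any $\delta \in (0, Z - N)$. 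I would then fix $\eta \in C_c^\infty(\R^3)$ with $\mathrm{supp}\,\eta \subset \{1 \leq |y| \leq 2\}$ and $\|\eta\|_2 = 1$, set $r_k := 4^k$ for $k \in \N$, and define $\chi_k(x) := r_k^{-3/2} \eta(x/r_k)$. Their supports $\{r_k \leq |x| \leq 2 r_k\}$ are pairwise disjoint, so $(\chi_k)$ is $L^2$-orthonormal and the cross terms $\langle \chi_j, h_{\mu, \rho} \chi_k\rangle$ vanish for $j \neq k$. A direct scaling calculation then gives, for $r_k \geq R_\delta$,
\[
\langle \chi_k, h_{\mu, \rho} \chi_k\rangle \leq \tfrac{1}{2} r_k^{-2} \|\nabla \eta\|_2^2 - \tfrac{Z - N - \delta}{r_k} \intr \tfrac{|\eta(y)|^2}{|y|} \di y \leq -\tfrac{c}{r_k} < 0
\]
for all $k \geq k^*$ with $k^*$ sufficiently large and some constant $c > 0$.

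The Courant-Fischer min-max principle applied to $\mathrm{span}\{\chi_{k^* + 1}, \ldots, \chi_{k^* + n}\}$ then yields $\lambda_n(h_{\mu, \rho}) \leq -c/r_{k^* + n}$ for every $n \in \N$, producing at least $n$ eigenvalues strictly below $0$; since $n$ is arbitrary, $h_{\mu, \rho}$ has infinitely many eigenvalues in $(-\infty, 0)$, each of finite multiplicity because isolated below $\sigma_{ess} = [0, \infty)$. Since $\lambda_n$ is increasing, bounded above by $0$, and satisfies $\lambda_n \leq -c/r_{k^* + n} \to 0$, we conclude $\lambda_n \to 0$ as claimed. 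The hardest part of this program will be establishing the Coulomb-tail asymptotics $V_{eff}(x) = -(Z - N)/|x| + o(1/|x|)$, in particular controlling the Hartree tail for non-compactly-supported $\rho$ and checking that $v_{xc}(\rho)$ is negligible on the scale $1/|x|$; once this is in hand, both the essential-spectrum analysis via Weyl's theorem and the min-max extraction of infinitely many negative eigenvalues are routine.
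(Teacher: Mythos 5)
Your proof is correct and takes essentially the same route as the paper: establish $\sigma_{ess}=[0,\infty)$ via Weyl's theorem by viewing $V_{eff}=v_{ext}+v_H+v_{xc}$ as a relatively compact perturbation of $-\tfrac12\Delta$, then exploit the sign $v_{xc}\le 0$ together with the net attractive Coulomb tail of strength $Z-N$ to produce infinitely many negative eigenvalues via min-max. The one point where you diverge is in the second step. The paper makes a single operator inequality $h_{\mu,\rho}\le -\tfrac12\Delta + v_{ext}+v_H$ and then invokes Lemma~II.1 of Lions' Hartree--Fock paper \cite{Lions-Hartree} as a black box to conclude that the right-hand side is negative on an infinite-dimensional subspace whenever $Z>N$. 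You instead make this self-contained: a multipole expansion gives $v_{ext}+v_H \le -(Z-N-\delta)/|x|$ for $|x|$ large (using compact support of $\mu$ and, for the Hartree term, the exponential decay \eqref{eq:decay} to control the first moment of $\rho$), and you then feed dilated bump functions supported in disjoint annuli $\{r_k\le|x|\le 2r_k\}$, $r_k=4^k$, into the quadratic form; the kinetic term scales like $r_k^{-2}$ while the Coulomb-tail term scales like $-c/r_k$, so the form is negative on the span of these mutually orthogonal, mutually form-orthogonal functions. This is in fact the standard proof of the Lions lemma, so the arguments are mathematically equivalent; the paper's version is shorter by reference, yours is self-contained and exhibits concretely where the accumulation of eigenvalues at $0$ comes from. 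One small point you should tighten if writing this out in full: the bound $|\tfrac{1}{|x-y|}-\tfrac{1}{|x|}|\le\tfrac{|y|}{|x|\,|x-y|}$ degenerates when $|x-y|$ is small, so the $O(|x|^{-2})$ tail estimate for $v_H$ needs a short splitting argument (e.g.\ $|y|\le|x|/2$ vs.\ $|y|>|x|/2$, the latter contribution being exponentially small thanks to \eqref{eq:decay}); this is routine but should be said.
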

\begin{proof}
Due to assumption \eqref{eq:assumptions} the potential $v_{ext} + v_H + v_{xc}$ belongs to  the space $L^2(\R^3) + L^{\infty}_{\epsilon}(\R^3)$, consisting of potentials $v$ which for any given $\epsilon>0$, can be decomposed as $v = v_1 +v_2$ with $||v_1||_2 < \infty$, $||v_2||_{\infty} < \epsilon$. 
Therefore is relatively compact w.r.t. the Laplacian.
Hence by Weyl's Theorem (see e.g. Chapter XIII.4 of \cite{reed1978methods}) $h_{\mu, \rho} $ is self-adjoint and we have $\sigma_{ess} \big(h_{\mu, \rho} \big) =\sigma_{ess} \big( - \Delta \big)  = [0, \infty)$.
Furthermore, $h_{\mu, \rho}$ is bounded from below. To see this, write $h_{\mu, \rho} = - \Delta + U_2 + U_\infty$ with $U_2 \in L^2(\R^3)$ and $U_\infty \in L^{\infty}_{\epsilon}(\R^3)$. Then for every $\psi \in H^1(\R^3)$ with $||\psi||_2 =1$ we have
\begin{align}
    \langle \psi, h_{\mu, \rho} \psi \rangle
    \geq 
    \frac{1}{2} ||\nabla \psi||_{2}^{2} - ||U_2||_2 ||\psi||_{4}^{2} - ||U_{\infty}||_{\infty}.
    \end{align}
Now the Sobolev embedding $||\phi||_{4}^{2} \leq C ||\nabla \phi||_{2}^{\nicefrac{3}{2}}$ and Young's inequality give the asserted lower bound.
Now note that
\begin{equation} \label{eq:hamiltonian_inequality}
    h_{\mu, \rho} \leq - \frac{1}{2} \Delta + v_{ext} + v_{H}
\end{equation}
and since by assumption $Z>N$ we can apply Lemma II.1 of \cite{Lions-Hartree} which gives us that the right-hand side operator of \eqref{eq:hamiltonian_inequality} is negative on an infinite-dimensional subspace. Hence so is $h_{\mu, \rho}$. The min-max principle now gives that it has infinitely many negative eigenvalues.
This completes the proof.
\end{proof}

\section{Optimal HOMO-LUMO excitations \label{sec:excitation_sets}}
In \cite{Friesecke-kniely}, motivated by the design of photovoltaic materials, results are given for \textit{optimal} HOMO-LUMO excitations with respect to some control goals. The control is the nuclear charge distribution (e.g., the doping profile or the heteroatom substitutions); typical control goals are the spatial electron-hole charge separation or the size of the HOMO-LUMO gap.
The analysis in \cite{Friesecke-kniely} relies on the simplyfing assumption of bounded domains. Here we generalize this analysis to unbounded domains.
The main difficulty again consists in handling loss of mass at infinity.

\begin{lem}[Analytic properties of the set of HOMO-LUMO excitations] \label{lem:compactness}
Consider a positively charged system, i.e. $Z>N$, then
the joint solution set to the governing variational pronciples for occupied KS orbitals, HOMO and LUMO parametrized by the set of nuclear charge distributions $\mu$,
\[
\mathcal{B}
= \{ (\Phi, \phi_H, \phi_L, \mu) : \mu \in \A_{nuc},~ (\Phi,\phi_H,\phi_L) \text{ definded by } \eqref{eq:definition_homo_lumo}\}
\]
has the following properties:
\begin{enumerate}[(a)]
    \item $B$ is weak $\times$ weak $\times$ weak $\times$ weak$^*$-closed in $H^1(\R^3)^n \times H^1(\R^3) \times H^1(\R^3) \times \M$
    \item $B$ is strong $\times$ strong $\times$ strong $\times$ weak$^*$-compact in $H^1(\R^3)^n \times H^1(\R^3) \times H^1(\R^3) \times \M$
\end{enumerate}
\end{lem}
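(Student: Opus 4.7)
The plan is to prove the strong compactness statement (b); claim (a) will then follow, since any weakly convergent sequence in $\mathcal{B}$ has, by (b), a subsequence converging strongly (and hence weakly) to some limit in $\mathcal{B}$, which by uniqueness of weak limits must coincide with the weak limit of the full sequence. The overall strategy mirrors the concentration-compactness proof of Theorem \ref{thm:lumo}, with the additional complication that the nuclear charge $\mu^k$ varies along the sequence.

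\medskip

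\textbf{Uniform bounds and weak limits.} Given a sequence $(\Phi^k, \phi_H^k, \phi_L^k, \mu^k) \in \mathcal{B}$, each $\mu^k$ has total mass $Z$ and is supported in the fixed compact set $\Omega_{nuc}$, hence $(\mu^k)$ is weak-$*$ precompact in $\M$. Since $\Phi^k$ minimises $\E_{\mu^k}$, comparison with a fixed test configuration bounds $\E_{\mu^k}[\Phi^k]$ from above, which combined with the lower bounds in Lemma \ref{lem:bounds_ks} gives a uniform $H^1$-bound on $\Phi^k$. The HOMO $\phi_H^k$ lies in the $n$-dimensional span of $\Phi^k$, so is $H^1$-bounded. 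The construction of a negative test state in Lemma \ref{lem:fundamental_properties_infimum} depends only on $Z>N$ and on the exponential decay of the occupied orbitals; this decay is uniform in $k$, so $\E_{\mu^k,\rho^k}[\phi_L^k]<0$ uniformly and Lemma \ref{lem:bounds_excitation_energy} yields a uniform $H^1$-bound on $\phi_L^k$. Passing to a subsequence gives weak $H^1$-limits $\Phi, \phi_H, \phi_L$ and a weak-$*$ limit $\mu \in \A_{nuc}$.

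\medskip

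\textbf{Ruling out loss of mass.} For $\Phi^k$ we invoke the concentration-compactness analysis of \cite{cances}: since $Z>N$, strict binding rules out splitting of the minimising sequence and yields $\Phi^k \to \Phi$ strongly in $L^2$; uniformity of this bookkeeping in $k$ is ensured by the strong-$\times$-weak-$*$ continuity of $(\Phi,\mu)\mapsto V[\rho]$ established in Lemma \ref{lem:bounds_ks}. Convergence of the total energies then forces convergence of the kinetic energies, upgrading to strong $H^1$-convergence. For the HOMO, writing $\phi_H^k=\sum_{i=1}^n c_i^k \phi_i^k$ with $\sum_i |c_i^k|^2=1$ and extracting a convergent subsequence of coefficients transfers strong $H^1$-convergence from $\Phi^k$ to $\phi_H^k$. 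For $\phi_L^k$ we rerun the concentration-compactness argument of Theorem \ref{thm:lumo}: the quadratic partition-of-unity splitting and the strict monotonicity of the model infima $I_\lambda$ exclude vanishing and dichotomy, now uniformly in $k$ thanks to the joint continuity of the potential terms in $(\Phi,\chi,\mu)$ from Lemma \ref{lem:bounds_excitation_energy}.

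\medskip

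\textbf{Passage to the variational principles and main obstacle.} With strong $L^p$-convergence for $p\in[2,6)$, weak $H^1$-convergence of $\Phi^k,\phi_H^k,\phi_L^k$, and $\mu^k\rightharpoonup^*\mu$, the three variational principles in \eqref{eq:definition_homo_lumo} transfer to the limit: lower semicontinuity from Lemma \ref{lem:bounds_excitation_energy} handles one direction, while for the reverse inequality we use, for any fixed competitor $\chi$ of the limit problem, an orthonormalised recovery sequence $\chi^k$ admissible for problem $k$ with $\chi^k\to\chi$ strongly; continuity of the potential terms then yields $\E_{\mu^k,\rho^k}[\chi^k]\to\E_{\mu,\rho}[\chi]$. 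The principal obstacle throughout is the uniformity in $k$ of the concentration-compactness bookkeeping in presence of a varying $\mu^k$, and it is precisely the joint continuity statements in Lemmata \ref{lem:bounds_ks}-\ref{lem:bounds_excitation_energy} that deliver this uniformity and close the argument.
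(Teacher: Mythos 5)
Your overall architecture (uniform bounds, ruling out mass loss, recovery sequences via the projectors, and using the strict monotonicity of $\lambda\mapsto I_\lambda$ for the LUMO) is the right one, and reversing the order of (a) and (b) is immaterial since the heart of the argument is the same either way. However, there is a genuine gap in the most delicate step, the passage to the minimizer $\Phi$ of the KS energy.

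You propose to rule out loss of mass for the occupied orbitals by ``invoking the concentration-compactness analysis of \cite{cances}: since $Z>N$, strict binding rules out splitting of the minimising sequence.'' This does not apply here. Each $\Phi^k$ minimises a \emph{different} functional $\E_{\mu^k}$; the $\Phi^k$ are not a minimising sequence of a single variational problem, so the strict-binding/no-dichotomy dichotomy from \cite{cances} is not directly available, and adding the slogan ``uniformity in $k$ is ensured by joint continuity'' does not turn it into a proof. What the paper actually does: it first observes that, by the minimality of $\Phi^{(\nu)}$ and weak-$*$ continuity of $\mu\mapsto\E_\mu[\Psi]$,
\begin{equation*}
\limsup_{\nu\to\infty}\E_{\mu^{(\nu)}}[\Phi^{(\nu)}]\leq \inf_{\A}\E_\mu
\qquad\text{and}\qquad
\liminf_{\nu\to\infty}\E_{\mu^{(\nu)}}[\Phi^{(\nu)}]\geq \E_\mu[\Phi],
\end{equation*}
so that a potential loss of mass leaves $\Phi$ below the admissible set $\A$ but with energy $\leq\inf_\A\E_\mu$. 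It then constructs an explicit competitor by adding a small rescaled bump $\eta_{\lambda,\sigma}=\sigma^{1/2}\lambda^{3/2}\eta(\lambda\,\cdot)$ placed far away; the point is that, by assumption \eqref{eq:assumption_at_zero} on $e_{xc}$ (subcritical negative power near $\rho=0$), one has $\E_\mu[\eta_{\lambda,\sigma}]<0$ for suitable $\sigma,\lambda\to 0$, so the competitor strictly lowers the energy, contradicting the inequality above. Your write-up never invokes assumption \eqref{eq:assumption_at_zero} at all, and the construction of the lower-energy competitor is the step that makes this part of the proof go through; without it, your claim that $\Phi^k\to\Phi$ strongly in $L^2$ is unsupported.

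A secondary point: your paragraph on the LUMO says you ``rerun the concentration-compactness argument of Theorem \ref{thm:lumo}.'' In fact the paper's proof is cleaner: it does not need the full cut-off/dichotomy machinery again, only the chain $I_1=\inf_{\A_\Phi^L}\E_{\mu,\Phi}\geq\E_{\mu,\Phi}[\varphi_L]\geq I_{\|\varphi_L\|}$, which by strict monotonicity of $\lambda\mapsto I_\lambda$ forces $\|\varphi_L\|_2=1$. You do mention the monotonicity, so you are close, but framing it as a re-run of the whole argument obscures the shortcut and again leans on unjustified uniformity claims.
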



\begin{proof}
\textbf{Part (a)}
Let $(\Phi^{(\nu)},\phi_H^{(\nu)},\phi_L^{(\nu)}) \rightharpoonup (\Phi, \phi_H,\phi_L)$ in $H^1(\R^3)^{(n+2)}$ and $\mu^{(\nu)} \rightharpoonup^{*} \mu$ in $\M$, then we need to prove
\begin{align*}
\text{(i) } \mu \in \A_{nuc} \quad 
\text{(ii) } \Phi \in \argmin_{\A} \E_{\mu} \quad
\text{(iii) } \phi_H \in \argmax_{\A_{\Phi}^{H}}\E_{\mu, \rho} \quad
\text{(iv) } \phi_L \in \argmin_{\A_{\Phi}^{L}} \E_{\mu, \rho} .
\end{align*}

\textbf{Ad (i):} Since all measures $\mu^{(\nu)}$ are supported on the compact set $\Omega_{nuc}$, the constant functions are in the predual of $\M$, that is the space of the continuous functions on $\Omega_{nuc}$, and hence $ Z = \mu^{(\nu)} \left( \Omega_{nuc}\right) \to \mu \left( \Omega_{nuc}\right) $, so $\mu \in A_{nuc}.$

\textbf{Ad (ii): }
For any admissible $\Psi \in \A$ we have by the variational definition of $\Phi^{(\nu)}$ and the weak$^*$  continuity of $\mu \mapsto \E_{\mu}[\Psi]$
\begin{equation} \label{eq:beginning_argument}
\E_{\mu^{(\nu)}}[\Phi^{(\nu)}] 
\leq 
\E_{\mu^{(\nu)}}[\Psi]
 \gegen{\nu}{\infty}
 \E_{\mu}[\Psi],
\end{equation}

which implies
\begin{equation} \label{eq:inf_upper}
\limsup \limits_{\nu \to \infty} \E_{\mu^{(\nu)}}[\Phi^{(\nu)}]  \leq \inf \limits_{\A} \E_{\mu},
\end{equation}

since $\Psi \in \A$ was arbitrary.

Using the weak $\times$ weak$^*$ lower semicontinuity of $(\Phi, \mu) \mapsto \E_{\mu}[\Phi]$ on $H^1(\R^3)^n \times \M$ gives 
\begin{equation} \label{eq:inf_lower}
\liminf \limits_{\nu \to \infty} \E_{\mu^{(\nu)}}[\Phi^{(\nu)}]
\geq 
\E_{\mu}[\Phi],
\end{equation}

but unlike in the bounded domain case this does not directly give us the result since $\A$ is not weakly closed.

Fortunately, the $\Phi^{(\nu)}$ are not arbitrary elements of $\A$. Since the energy functional is invariant under unitary transformations, we can always assume the orbitals $\Phi = (\phi_1, \ldots, \phi_n)$ to be orthogonal. 
So the only thing that could go wrong is loss of mass due to weak convergence. 

Assume for contradiction $\alpha = ||\Phi|| < n$ so we lose mass in at least one orbital $\phi_k$.
In order to see that this is not possible, we will place some small mass at a large but finite distance, and obtain a state with lower energy.

We take $\eta \in C^{\infty}_c(\R^3)$ with $||\eta||_{2} = 1$ and consider the test function $\eta_{\lambda, \sigma} := \sigma^{\nicefrac{1}{2}} \lambda^{\nicefrac{3}{2}} \eta(\lambda \cdot)$.

Then
\begin{align*}
    \E_{\mu}[\eta_{\lambda, \sigma}]
    =
    \sigma \lambda^2 T[\eta]
    +
    V[|\eta_{\lambda, \sigma}|^2]
    +
    \sigma^{2} \lambda J[|\eta|^2]
    +
    E_{xc}[|\eta_{\lambda, \sigma}|^2].
\end{align*}
Using the assumption \eqref{eq:assumption_at_zero} and the fact that $V[|\eta|^2] \leq 0$, we obtain that for $\sigma$ small enough there exists a constant $c>0$ such that

\begin{align*}
    \E_{\mu}[\eta_{\lambda, \sigma}]
    \leq 
    \sigma \lambda^2 T[\eta]
    +
    \sigma^{2} \lambda J[|\eta|^2]
    -
    c \sigma^{q} \lambda^{3(q-1)} \intr |\eta|^{2q} \di x.
\end{align*}
Since $q < \tfrac{3}{2}$, the negative term dominates in the limit $\sigma, \lambda \to 0$ (if we let both go to $0$ at the same speed). 
Hence by choosing the parameters $\lambda$ and $\sigma$ small enough we ensure $\E_{\mu} [\eta_{\lambda, \sigma}] <0$.

Now since we assume loss of mass in the k-th orbital $\phi_k$, let us consider $\tilde{\phi}^{(n)}_k (\cdot) = \phi_k (\cdot) + \eta_{\lambda, \sigma}(\cdot - n \Vec{e})$, where $\Vec{e}$ is some unit vector in $\R^3$. 
Denoting the orbitals with $\phi_k$ replaced by $\tilde{\phi}_k$ by $\tilde{\Phi}$, we get
\begin{align*}
    \E_{\mu}[\tilde{\Phi}]
    \leq 
    \E_{\mu}[\Phi]
    +
    T[\eta_{\lambda, \sigma}] + J[|\eta_{\lambda, \sigma}|^2] + E_{xc}[|\eta_{\lambda, \sigma}|^2]
    +
    o(1)
    <
    \E_{\mu}[\Phi]
    +
    o(1).
\end{align*}
So for $n$ large enough we obtain $ \E_{\mu}[\tilde{\Phi}] < \E_{\mu}[\Phi] $ and $||\tilde{\Phi}|| > ||\Phi||$.

We can now repeat these steps until we have constructed a $\Psi $ with $\Psi \in \A$ (after a suitable unitary transformation), and arrive at the contradiction
\begin{align*}
    \E_{\mu}[\Psi]
    < 
    \E_{\mu}[\Phi]
    \overset{\eqref{eq:inf_upper}, \eqref{eq:inf_lower}}{\leq}
    \inf \limits_{\A} \E_{\mu}.
\end{align*}
Hence there is no loss of mass, $\Phi \in \A$, and (ii) holds.

Before we move on to (iii) and (iv) we mention that due to the upper and lower bound above,  $\E_{\mu^{(\nu)}}[\Phi^{(\nu)}] \to \E_{\mu}[\Phi]$. But this gives us $T[\Phi^{(\nu)}] \to T[\Phi]$ since the other terms in the energy functional, namely $V, J_{h}, E_{xc}$, are continuous on $L^{2} \cap L^{4}$, and similarly to the proof of Lemma \ref{lowersemicontinuity} we infer strong convergence in $L^{p}$ for $p \in [2,6)$, since there is no loss of mass.

Hence the kinetic energy converges as well, which means $||\nabla \Phi^{(\nu)}||_{2} \to ||\nabla \Phi||_{2}$, giving us 
$\nabla \Phi^{(\nu)} \to \nabla \Phi$ in $L^2$, so $ \Phi^{(\nu)} \to \Phi$ in $H^1(\R^3)^n$.

\textbf{Ad (iii):}
As in \cite{Friesecke-kniely} the statements (iii) and (iv) are more difficult
since the HOMO and LUMO orbitals $\varphi^{(\nu)}_H$ and $\varphi^{(\nu)}_L$ are not defined via universal but $\varphi^{(\nu)}$-dependent sets
and hence an admissible trial function $\psi$ for the limiting HOMO and LUMO orbitals $\varphi_H$ and $\varphi_L$ may not be admissible for the variational principle for the approximating orbitals.
In short, our argument in \eqref{eq:beginning_argument} is not valid anymore.

Hence we need to look at the $L^2$-projector of the sequence $\Phi^{(\nu)} = (\varphi_{1}^{(\nu)}, \ldots,\varphi_{n}^{(\nu)} )$, i.e. $\gamma_{\Phi^{(\nu)}} \chi := \sum_{k =1}^{n} \langle \varphi_{k}^{(\nu)}, \chi \rangle \varphi_{k}^{(\nu)}$.
For any given $\chi \in L^2(\R^3)$ the mapping $\Phi \mapsto \gamma_{\Phi}\chi$ is strongly continuous from $H^1(\R^3)^n$ to $H^1(\R^3)$, i.e.
\begin{equation}\label{eq:convergence_projector}
    \gamma_{\Phi^{(\nu)}} \chi 
    \to 
    \gamma_{\Phi} \chi \text{ in } H^1(\R^3), 
    \quad
    ||\gamma_{\Phi^{(\nu)}} \chi||_2
    \to 
    ||\gamma_{\Phi} \chi||_2.
\end{equation}{}

Furthermore by definition  we have $\gamma_\Phi \chi = \chi$ and $||\chi||_2 =1$ for any $\chi \in \A^H_\Phi$. 
So by \eqref{eq:convergence_projector} we have $||\gamma_\Phi^{(\nu)} \chi||_2 >0$ for all $\nu$ large enough and therefore by the variational principle for the HOMO \eqref{eq:definition_homo_lumo}
\begin{equation} \label{eq:convergence_excitation_homo}
    \E_{\mu^{(\nu)}, \Phi^{(\nu)}}[\varphi_{H}^{(\nu)}]
    \geq 
    \E_{\mu^{(\nu)}, \Phi^{(\nu)}}\bigg[ \frac{\gamma_\Phi^{(\nu)} \chi}{||\gamma_\Phi^{(\nu)} \chi||_2} \bigg]
    = 
    \frac{1}{||\gamma_\Phi^{(\nu)} \chi||_{2}^{2}}
    \E_{\mu^{(\nu)}, \Phi^{(\nu)}}[\gamma_\Phi^{(\nu)} \chi]
    \to 
    1 \cdot \E_{\mu, \Phi}[\chi].
\end{equation}

Here we used the strong convergence of \eqref{eq:convergence_projector} and the fact that the map $(\Phi, \chi, \mu) \mapsto \E_{\mu, \rho}[\chi]$ 
is weak $\times$ strong $\times$ weak$^*$ continuous on $ H^1(\R^3)^n \times H^1(\R^3) \times \mathcal{M}$ due to Lemma \ref{lem:bounds_excitation_energy}.

Since \eqref{eq:convergence_excitation_homo} holds for any $\chi \in \A^H_\Phi$, we have
\begin{equation}\label{eq:homo_upper}
    \liminf_{\nu \to \infty}
     \E_{\mu^{(\nu)}, \Phi^{(\nu)}}[\varphi_{H}^{(\nu)}]
     \geq 
     \sup_{\chi \in \A^{H}_{\Phi}} \E_{\mu,\Phi}[\chi]. 
\end{equation}

Next we prove that $\varphi_H \in A^{H}_\Phi$. 
Since by definition the $\varphi_{H}^{(\nu)}$ lie in the span of $\Phi^{(\nu)}$, we obtain by the weak convergence of $\varphi_{H}^{(\nu)}$ in $H^1(\R^3)$ and the strong convergence of $\Phi^{(\nu)}$ in $H^1(\R^3)^n$ that 
\begin{align*}
    \varphi_{H}^{(\nu)}
    = 
    \gamma_{\Phi^{(\nu)}} \varphi_{H}^{(\nu)}
    =
    \sum_{k =1}^{n} \langle \varphi_{k}^{(\nu)}, \varphi_{H}^{(\nu)} \rangle \varphi_{k}^{(\nu)}
    ~~
    \longrightarrow 
     ~~
     \sum_{k =1}^{n} \langle \varphi_{k}, \varphi_{H} \rangle \varphi_{k}
     =
     \gamma_{\Phi} \varphi_{H}.
\end{align*}
So taking the limit yields $\varphi_H = \gamma_\Phi \varphi_H$ and hence $\varphi_H^{(\nu)} \to \varphi_H$ strongly, so $\varphi_H \in \A^{H}_\Phi$. 

Furthermore by the continuity properties proven in Lemma \ref{lem:bounds_excitation_energy} we obtain
\begin{equation}\label{eq:homo_lower}
    \lim_{\nu \to \infty}
     \E_{\mu^{(\nu)}, \Phi^{(\nu)}}[\varphi_{H}^{(\nu)}]
     = 
      \E_{\mu,\Phi}[\varphi_{H}]. 
\end{equation}
Combining \eqref{eq:homo_upper} and \eqref{eq:homo_lower} yields (iii).

\textbf{Ad (iv):}

The corresponding proof for the LUMO starts similarly. 
Take any $\chi \in \A^L_\Phi$, i.e. $\gamma_\Phi \chi = 0$ and $||\chi||_2 =1$. 
Then again by \eqref{eq:convergence_projector} and the variational principle for the LUMO \eqref{eq:definition_homo_lumo} we obtain
\begin{equation}
    \E_{\mu^{(\nu)}, \Phi^{(\nu)}}[\varphi_{L}^{(\nu)}]
    \leq 
    \E_{\mu^{(\nu)}, \Phi^{(\nu)}}\bigg[\frac{(I - \gamma_\Phi^{(\nu)}) \chi}{||(I - \gamma_\Phi^{(\nu)} ) \chi||_2}\bigg]
    = 
    \frac{1}{||(I - \gamma_\Phi^{(\nu)} )\chi||_{2}^{2}}
    \E_{\mu^{(\nu)}, \Phi^{(\nu)}}[\gamma_\Phi^{(\nu)} \chi ]
    \to 
    1 \cdot \E_{\mu, \Phi}[\chi].
\end{equation}
Minimizing over $\chi \in A^{H}_L$ gives 

\begin{equation} \label{eq:upper_bound_lumo}
     \limsup_{\nu \to \infty} \E_{\mu^{(\nu)}, \Phi^{(\nu)}}[\varphi_{L}^{(\nu)}]
    \leq
   \inf_{\chi \in \A^L_\Phi} \E_{\mu, \Phi}[\chi].
\end{equation}

For the lower bound we use the weak $\times$ weak $\times$ weak$^*$ lower semicontinuity of the map $(\Phi, \mu, \chi) \mapsto \E_{\mu, \Phi}[\chi]$  by Lemma \ref{lem:bounds_excitation_energy}

\begin{equation}\label{eq:lower_bound_lumo}
    \E_{\mu, \Phi}[\varphi_{L}] 
    \leq 
    \liminf_{\nu \to \infty}  \E_{\mu^{(\nu)}, \Phi^{(\nu)}}[\varphi_{L}^{(\nu)}].
\end{equation}

Unfortunately we are not done since the $\varphi_L^{(\nu)}$ are not known to converge strongly but just weakly, so we still need to prove that $\varphi_L$ is admissible.
The limit $\varphi_L$ lies in the orthogonal complement of the $(\varphi_k)_{k=1}^{n}$, since $\Phi^{(\nu)}$ converges strongly and therefore
\[
0 = \langle \varphi_L^{(\nu)} , \varphi_k^{(\nu)} \rangle
\gegen{\nu}{\infty} 
\langle \varphi_L , \varphi_k \rangle
\qquad \forall k \in \{ 1, \ldots, n \}.
\]
So we only need to prove $||\varphi_L||_2 =1$. 
Assume $||\varphi_L||_2  < 1$. Then by \eqref{eq:upper_bound_lumo} and \eqref{eq:lower_bound_lumo}
\begin{equation}
    I_1 
    =
    \inf_{\chi \in \A^L_\Phi} \E_{\mu, \Phi}[\chi] 
    \geq 
    \E_{\mu, \Phi}[\varphi_{L}]
    =
    \E[\varphi_L]
    \geq 
    I_{||\varphi_L||}.
\end{equation}
But we proved in Lemma \ref{lem:fundamental_properties_infimum} that the mass--to--LUMO--energy map $\lambda \mapsto I_\lambda$ is strictly decreasing. Hence $||\varphi_L||_2 = 1$, so $\varphi_L$ is admissible and together with \eqref{eq:upper_bound_lumo} and \eqref{eq:lower_bound_lumo} this establishes (iv).

Before we prove part b) let us make some remarks.
Since there is no loss of mass, $\varphi_{L}^{(\nu)}$ converges strongly in $L^2(\R^3)$ and hence strongly in $L^p(\R^3)$ for $p \in [2,6)$. But since $ \E_{\mu^{(\nu)}, \Phi^{(\nu)}}[\varphi_{L}^{(\nu)}] \to \E_{\mu, \Phi}[\varphi_L]$ and all terms except the kinetic energy converge due to the continuity results in Lemma \ref{lem:bounds_excitation_energy}, we must also have 
$T[\varphi_{L}^{(\nu)}] \to T[\varphi_L]$, i.e. $\varphi_L^{(\nu)}$ converges strongly in $H^1(\R^3)$.

In conclusion, we know that $\big(\Phi^{(\nu)}, \varphi_{H}^{(\nu)}, \varphi_{L}^{(\nu)} \big)$ converges strongly in $H^1(\R^3)^{(n+2)}$.

\textbf{Part (b)}

To prove sequential compactness of the set $\mathcal{B}$ is now quite easy since by part a) and Banach-Alaoglu we just need to prove that any sequence $\big(\Phi^{(\nu)}, \varphi_{H}^{(\nu)}, \varphi_{L}^{(\nu)}, \mu^{(\nu)} \big) \in \mathcal{B}$ is bounded in $H^1(\R^3)^{(n+2)} \times \M$.
For $\Phi^{(\nu)}$ and $\varphi_{L}^{(\nu)}$ this follows from the bounds in Lemma \ref{lem:bounds_ks} and \ref{lem:bounds_excitation_energy}, respectively, noting that the exponent $p \in [1, \tfrac{5}{3} )$ in the assumption on $v_{xc}$ has the property that the exponents $\tfrac{3(p-1)}{2}$ and  $\tfrac{3(p-1)}{2p}$ of $||\rho||_3$ are strictly less than 1.
Since the $\Phi^{(\nu)}$ stay bounded, so do the $\varphi_{H}^{(\nu)}$. 
Lastly, since $\mu^{(\nu)} \geq 0$ we have $||\mu^{(\nu)}||_{\M} = \intr \di \mu^{(\nu)} = Z$, which concludes the proof.
\end{proof}

As an example of a control goal we consider bandgap tuning as introduced in \cite{Friesecke-kniely}. 
Here the quantity which one wants to influence by a suitable choice of the nuclear charge distribution $\mu$ is the HOMO-LUMO bandgap $\epsilon_H - \epsilon_L$,
where $\epsilon_H$ and $\epsilon_L$ stand for the HOMO and LUMO eigenvalues of the KS Hamiltonian \eqref{eq:ks-canonical}. Any bandgap tuning functional promoting a desired target value $\epsilon_{\ast}$ has to reach its minimum when $\epsilon_H - \epsilon_L = \epsilon_{\ast}$. A simple choice suggested in \cite{Friesecke-kniely} is 
\begin{equation}\label{eq:bandgap}
    J[\Phi, \phi_H,\phi_L, \mu] 
    = \big| \epsilon_L - \epsilon_H - \epsilon_{\ast} \big| ^{2}
    = \big| \E_{\mu, \rho}[\phi_L] - \E_{\mu, \rho}[\phi_H] - \epsilon_{\ast} \big|^2. 
\end{equation}

\begin{theorem}
For any $\epsilon_{\ast}>0$ and for $Z >N$, there exists a nuclear charge distribution $\mu \in \A_{nuc}$ which minimizes the bandgap tuning functional \eqref{eq:bandgap} over $\A_{nuc}$ subject to the constraints \eqref{eq:definition_homo_lumo}.
\end{theorem}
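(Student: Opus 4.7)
The plan is to apply the direct method of the calculus of variations to the functional $J$ over the solution set $\mathcal{B}$ constructed in Lemma \ref{lem:compactness}, exploiting its compactness to extract a convergent minimizing sequence and using the continuity properties from Lemma \ref{lem:bounds_excitation_energy} to pass to the limit.

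First I would check that the admissible set is nonempty: by Theorem \ref{thm:lumo} and the existence of a KS ground state under the hypothesis $Z > N$ (as recalled in Remark \ref{rem:assumptions}), any fixed $\mu \in \A_{nuc}$ gives rise to at least one quadruple $(\Phi, \phi_H, \phi_L, \mu) \in \mathcal{B}$, so $J$ is bounded below on $\mathcal{B}$ by $0$ and a minimizing sequence exists. Then I would pick a minimizing sequence $(\Phi^{(\nu)}, \varphi_H^{(\nu)}, \varphi_L^{(\nu)}, \mu^{(\nu)}) \in \mathcal{B}$ with
\[
J[\Phi^{(\nu)}, \varphi_H^{(\nu)}, \varphi_L^{(\nu)}, \mu^{(\nu)}] \;\longrightarrow\; \inf_{\mathcal{B}} J.
\]

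Next I would invoke Lemma \ref{lem:compactness}(b): up to extracting a subsequence (not relabelled), there exists a limit $(\Phi, \varphi_H, \varphi_L, \mu)$ such that $\Phi^{(\nu)} \to \Phi$ strongly in $H^1(\R^3)^n$, $\varphi_H^{(\nu)} \to \varphi_H$ and $\varphi_L^{(\nu)} \to \varphi_L$ strongly in $H^1(\R^3)$, and $\mu^{(\nu)} \rightharpoonup^{*} \mu$ in $\M$, with the limit again belonging to $\mathcal{B}$ by Lemma \ref{lem:compactness}(a). In particular $\mu \in \A_{nuc}$, $\Phi$ solves the KS variational principle, and $\varphi_H$, $\varphi_L$ realise the HOMO and LUMO variational principles of \eqref{eq:definition_homo_lumo} for this limiting density.

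Finally I would pass to the limit in $J$. Since $(\Phi, \chi, \mu) \mapsto \E_{\mu, \rho}[\chi]$ is weak $\times$ strong $\times$ weak$^*$ continuous on $H^1(\R^3)^n \times H^1(\R^3) \times \M$ by Lemma \ref{lem:bounds_excitation_energy}, and along our subsequence the convergence is in fact strong $\times$ strong $\times$ weak$^*$, we obtain
\[
\E_{\mu^{(\nu)}, \rho^{(\nu)}}[\varphi_L^{(\nu)}] \;\longrightarrow\; \E_{\mu, \rho}[\varphi_L], \qquad
\E_{\mu^{(\nu)}, \rho^{(\nu)}}[\varphi_H^{(\nu)}] \;\longrightarrow\; \E_{\mu, \rho}[\varphi_H],
\]
whence by continuity of $t \mapsto |t - \epsilon_{\ast}|^2$,
\[
J[\Phi, \varphi_H, \varphi_L, \mu] = \lim_{\nu \to \infty} J[\Phi^{(\nu)}, \varphi_H^{(\nu)}, \varphi_L^{(\nu)}, \mu^{(\nu)}] = \inf_{\mathcal{B}} J,
\]
so the limit is a minimizer.

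The main obstacle has already been handled in Lemma \ref{lem:compactness}: the delicate point is the compactness of $\mathcal{B}$ under the weak$^*$ variation of $\mu$ combined with the risk of mass escaping to infinity in the orbitals and in the LUMO. Once that lemma is available, the argument above is a routine application of the direct method, and the assumption $Z > N$ enters only through Lemma \ref{lem:compactness} (which is where the concentration-compactness alternative of Section \ref{sec:existence} is ruled out).
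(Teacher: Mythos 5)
Your proposal is correct and takes essentially the same approach as the paper: continuity of the bandgap functional (from Lemma \ref{lem:bounds_excitation_energy}) plus compactness of $\mathcal{B}$ (from Lemma \ref{lem:compactness}) yield a minimizer by the direct method. You simply spell out the standard steps (nonemptiness, minimizing sequence, subsequence extraction, passage to the limit) that the paper leaves implicit.
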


\begin{proof}
The bandgap functional $J$ in \eqref{eq:bandgap} is, due to Lemma \ref{lem:bounds_excitation_energy},  weak $\times$ strong $\times$ strong $\times$ weak$^{*}$ continuous on $(H^1)^n \times H^1 \times H^1 \times \M$.
Hence by the compactness of the set $\mathcal{B}$ proven in Lemma \ref{lem:compactness} it attains its minimum over this set.
\end{proof}{}

\section{Nonexistence of HOMO-LUMO excitations in the neutral case $Z = N$ \label{sec:non_existence}}

We now introduce carefully chosen and realistic model densities $\rho$ and prove that the excitation functional $\E_{\mu, \rho}$ admits no excited states, i.e. no bound states other than the ground state. See the picture on the right in Figure \ref{fig:spectra}.

This finding suggests that also for the true KS ground state density $\rho$, it may happen that there are no exact HOMO-LUMO excitations.
Of course, the Hamiltonian possesses continuous spectrum  above the ground state energy and therefore ``metastable'' excitations (suitable square-integrable superpositions of continuous eigenstates) still exist.

From a mathematical point of view, the results in this section show that the assumption $Z>N$  in our existence result (Theorem \ref{thm:lumo}) was in fact \textit{sharp} and cannot be weakend to $Z\geq N$. 
Note that the model densities considered here satisfy the assumptions of Theorem \ref{thm:lumo}, in particular \eqref{eq:decay}.

\subsection*{H-Atom ground state density }

Let $\mu = \delta_0$ and let $\phi_H (x) = \tfrac{1}{\sqrt{\pi}} e^{- |x|}$ be the hydrogen atom ground state for the Schr\"{o}dinger equation, i.e. the lowest eigenfunction of  $- \tfrac{1}{2} \Delta - \tfrac{1}{|\cdot|}$.
Its density is
\begin{equation}\label{eq:hydrogen_density}
    \rho_H(x) = |\phi_H|^2(x) = \frac{1}{\pi} e^{- 2 |x|}.
\end{equation}
We expect this to be a good approximation for the KS density, hence the KS operator $h_{\rho_H}$ should be a good approximation to the self-consistent hydrogen KS operator.

In this case the Hartree-potential $v_H$ can be explicitly computed. The well known result is
\[
v_H = \intr \frac{1}{|x-y|} \rho(y) \di y  = \frac{1}{|x|} - e^{-2 |x|} \bigg( 1 + \frac{1}{|x|}\bigg).
\]
The first term cancels the external potential $v_{ext}$, hence the excitation functional becomes
\[
\E[\chi]
=
\frac{1}{2} \intr |\nabla \chi|^2 
+ 
\intr |\chi|^2 \bigg( - e^{-2 |x|} \bigg(1 + \frac{1}{|x|}\bigg) +v_{xc} \big( \tfrac{1}{\pi} e^{-2 |x|}\big) \bigg)
\]
with corresponding Hamiltonian
\[
h_{\rho_H}
=
- \frac{1}{2} \Delta + V(x)
= 
-\frac{1}{2} \Delta - e^{-2 |x|} \bigg(1 + \frac{1}{|x|}\bigg)   + v_{xc} \big( \tfrac{1}{\pi} e^{-2 |x|}\big) .
\]

\subsection*{Model ground state density for the He-Atom}

In order to construct a model density for Helium, we make the following ansatz. \\
We take a dilated version of the hydrogen orbital, i.e.
\begin{equation} \label{eq:toy_density}
    \varphi_\alpha (x) = \frac{\alpha^{\nicefrac{3}{2}}}{\sqrt{\pi}} e^{- \alpha |x|}  
    \text{ and }
    \rho_\alpha = 2 |\varphi_\alpha|^2,
\end{equation}
and  -- following Hans Bethe -- determine  
the parameter $\alpha$ by
\[
\alpha = \argmin \limits_{\beta >0} E_\beta, \qquad 
E_\beta = 2 T[\varphi_\beta] + 2 V_{ne}^{He}[\varphi_\beta] + \intr \intr \frac{|\varphi_\beta|^2(x) |\varphi_\beta|^2(y)}{|x-y|} \di x \di y.
\]
The last term describing the electron-electron interaction comes from
\[
V_{ee}[| \Vec{\psi}_1   \Vec{\psi}_2   \rangle]
=
\sum_{i <j} \left( \intr \intr \frac{| \Vec{\psi}_i|^2(x) | \Vec{\psi}_j|^2(y) }{|x-y|} - \frac{ \big( \Vec{\psi}_i \cdot  \Vec{\psi}_j^{\ast} \big)(x)  \big( \Vec{\psi}_i^{\ast} \cdot  \Vec{\psi}_j \big)(y)}{|x-y|} \right), 
\]
 with the spinors
$\Vec{\psi}_1  = \begin{pmatrix} \varphi \\  0 \end{pmatrix} $
and $\Vec{\psi}_2 = \begin{pmatrix} 0 \\ \varphi  \end{pmatrix}.$

The energy is easily computed as 
\[
E_\beta = \beta^2 - 4\beta + \tfrac{5}{8} \beta,
\]
which implies
\[
\alpha= \frac{27}{16} = 1.6875 .
\]

\begin{figure}[h]
    \centering
    \begin{subfigure}[c]{.5\textwidth}
    \includegraphics[width = 0.95 \textwidth]{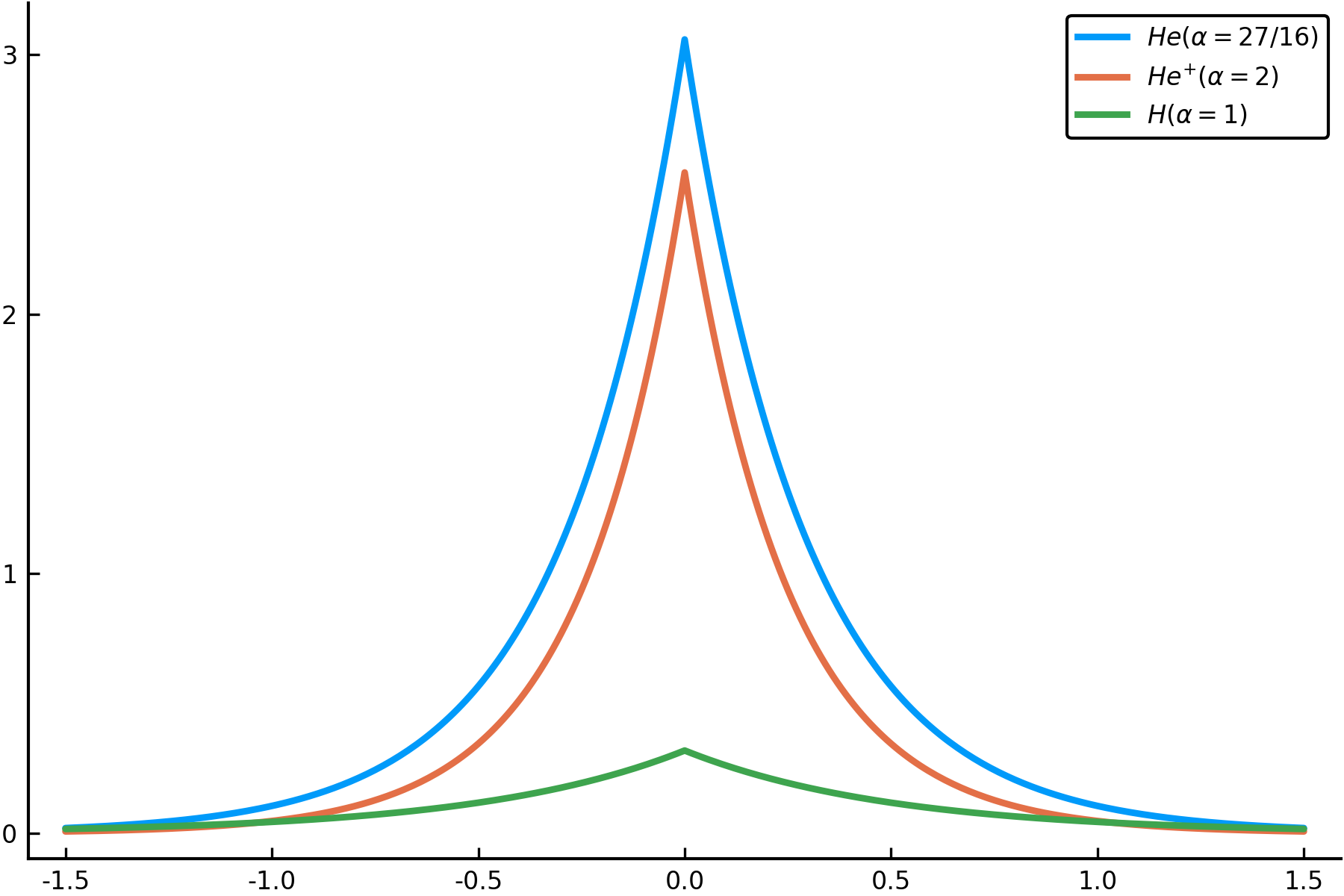}
    \end{subfigure}%
    \begin{subfigure}[c]{.5\textwidth}
    \includegraphics[width = 0.95 \textwidth]{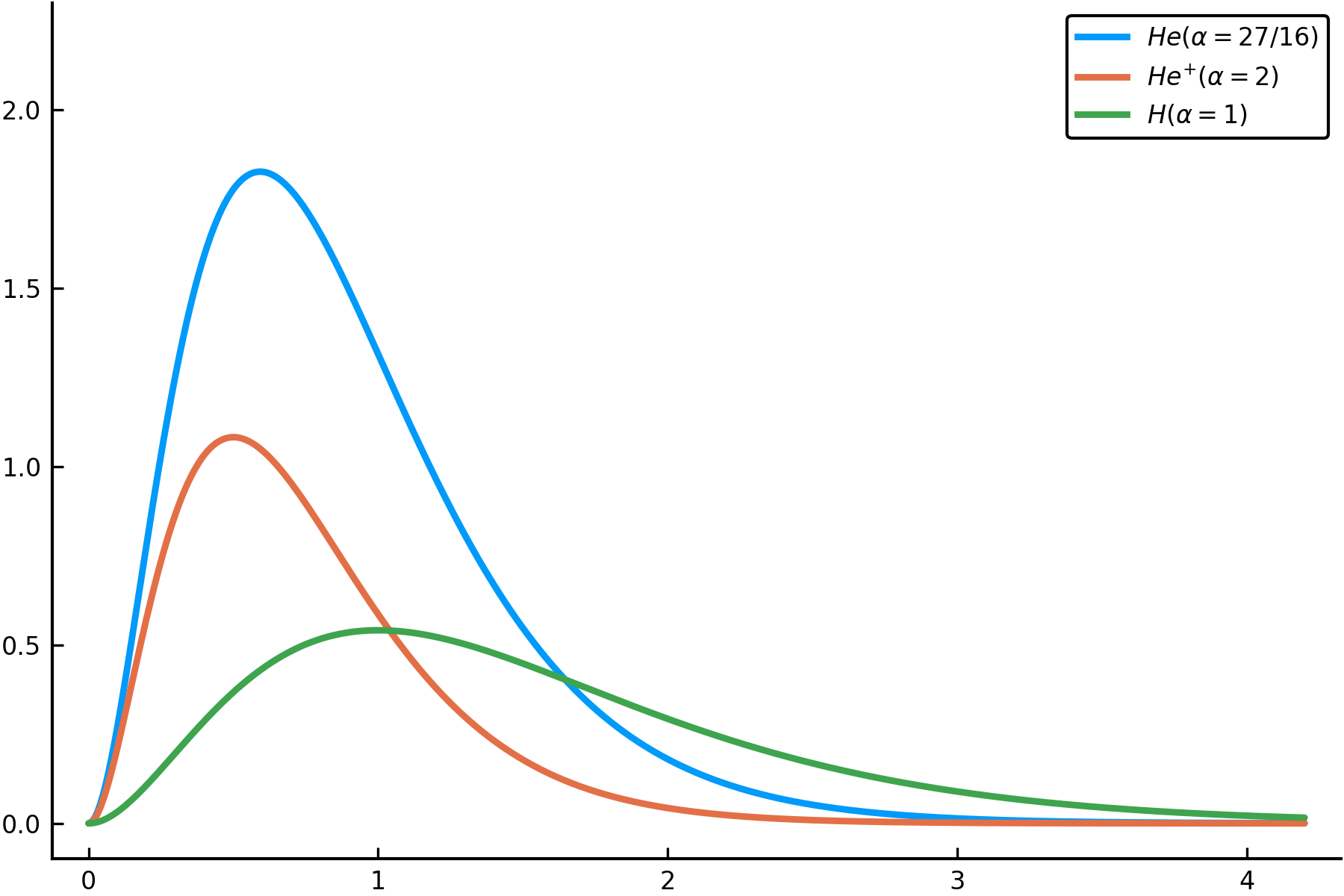}
    \end{subfigure}
    \caption{Comparison of the density (left) and radial density (right) of hydrogen, of the Helium ion He$^+$, and of the model \eqref{eq:toy_density} for Helium. }
    \label{fig:toy_density}
\end{figure}
In Figure \ref{fig:toy_density} one sees that this value corresponds to the fact that the second electron does not see the full Coulomb potential of the nucleus but a screened one.

With this density the Hartree-potential can again be computed explicitly and plugging this density into our excitation functional gives us the Hamiltonian
\[
h_{\rho_{\alpha}} = - \frac{1}{2} \Delta -  2 e^{-2 \alpha |x|} \bigg( \frac{1}{|x|} + \frac{1}{\alpha} \bigg) + v_{xc}(\rho_\alpha(x)).
\]
With these two densities at hand we can state the main result of this section and complete the picture given in Figure \ref{fig:spectra}.

\begin{theorem}[Spectrum of the KS Hamiltonian in the case $Z=N$]\label{thm:nonexistence}
Consider either the hydrogen atom ( $N=1, \mu = \delta_0$) with the density $\rho = \rho_H$ given by \eqref{eq:hydrogen_density} or the helium atom ($ N=2, \mu = 2 \delta_0$) with the density $\rho = \rho_\alpha$ given by \eqref{eq:toy_density}. 
Furthermore let the exchange-correlation energy be given by either Dirac exchange, PW81 or PZ92.
Then the spectrum of the KS Hamiltonian has the form
\begin{equation}
    \sigma \big( h_{\mu, \rho} \big) = \{\epsilon_0\} \cup [0, \infty), \quad \text{for some } \epsilon_0 < 0. 
\end{equation}
In particular, the Hamiltonian possesses exactly one bound state (up to spin in the hydrogen case) and no excited states, i.e. no bound states above the ground state.
\end{theorem}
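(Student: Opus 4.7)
The strategy is to reduce to radial problems via angular-momentum decomposition and then apply the Glaser--Martin--Grosse--Thirring (GMGT) bounds on the number of bound states to rule out anything beyond the ground state.

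First I would nail down the coarse structure of the spectrum. In all cases the density $\rho$ is radial, $\mu$ is a (multiple of) $\delta_0$, and $v_{xc}(\rho(r))$ is a composition with an exponentially decaying radial function, so the effective KS potential
\[
V(r) = -\tfrac{Z}{r} + v_H(r) + v_{xc}(\rho(r))
\]
is spherically symmetric, has only a $-Z/r$ singularity at the origin, and decays exponentially at infinity (for hydrogen, even the $-Z/r$ part is cancelled by the first term of $v_H$). In particular $V \in L^2(\mathbb{R}^3) + L^\infty_\varepsilon(\mathbb{R}^3)$, so $V$ is relatively compact with respect to $-\tfrac12\Delta$ and Weyl's theorem gives $\sigma_{ess}(h_{\mu,\rho}) = [0,\infty)$. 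A one-line trial-function computation (e.g.\ with the known ground-state orbital) shows $\inf \sigma(h_{\mu,\rho}) < 0$, and by Perron--Frobenius the unique ground state of a Schr\"odinger operator with a real potential can be chosen positive, hence lies in the $\ell=0$ sector.

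The decisive step is to bound the number of negative eigenvalues in each angular-momentum channel. By spherical symmetry,
\[
h_{\mu,\rho} \;=\; \bigoplus_{\ell\geq 0}\; h_\ell \otimes I_{2\ell+1}, \qquad h_\ell = -\tfrac12\partial_r^2 - \tfrac{1}{r}\partial_r + \tfrac{\ell(\ell+1)}{2r^2} + V(r),
\]
and it suffices to show $N_\ell(V)=0$ for every $\ell\geq 1$ and $N_0(V)\leq 1$. The GMGT inequalities \cite{glaser1976} provide, for any $p>\max(1,\tfrac{1}{2\ell+1})$, explicit bounds of the form
\[
N_\ell(V) \;\leq\; \frac{C_p}{(2\ell+1)^{2p-1}} \int_0^\infty r^{2p-1}\, [V(r)]_-^{\,p}\, dr,
\]
with $C_p = \frac{(p-1)^{p-1}}{p^p}\,\frac{\Gamma(2p)}{\Gamma(p)^2}$ (applied to the operator $-\Delta+2V$, which has the same negative eigenvalues as $-\tfrac12\Delta+V$). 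The factor $(2\ell+1)^{-(2p-1)}$ means that once the $\ell=1$ bound is below $1$, all higher-$\ell$ bounds are automatic, so the problem reduces to two scalar inequalities per case ($\ell=0$ giving $N_0<2$, $\ell=1$ giving $N_1<1$), with $p$ free to be optimized.

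The bulk of the work, and the main obstacle, is the explicit verification of these integral inequalities for each of the six combinations (hydrogen or helium, with Dirac, PZ81 or PW92). For Dirac exchange, $v_{xc}(\rho)$ is an explicit power of $\rho$, and the integrals reduce, after rescaling $r \mapsto \alpha r$, to one-dimensional integrals over explicit mixtures of exponentials and rational functions, which can be evaluated analytically or with tight numerical enclosures. For PZ81 and PW92 one has no closed form and must split the radial integral into a small-$r$ region (where $V$ is essentially $-Z/r - c_0 + O(r)$ with $c_0$ an explicit constant from $v_H$ and $v_{xc}$) and a large-$r$ region (where $V$ decays exponentially and $v_{xc}$ is controlled by the small-$\rho$ expansion of the PZ81/PW92 formulas verified in the appendix). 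Optimizing $p$ numerically then yields bounds strictly below the thresholds $1$ and $2$. Putting the three steps together establishes that $h_{\mu,\rho}$ has exactly one negative eigenvalue, completing the proof of Theorem \ref{thm:nonexistence}.
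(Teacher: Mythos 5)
Your proposal takes essentially the same route as the paper: Weyl's theorem gives $\sigma_{ess}=[0,\infty)$, an explicit Rayleigh--Ritz trial computation with the given orbital produces one negative eigenvalue, and the Glaser--Martin--Grosse--Thirring angular-momentum-resolved bound (optimized over $p$) shows $N_0\le 1$ and $N_\ell=0$ for $\ell\ge 1$. The paper simply evaluates the GMGT integral numerically at a single optimal $p$ per case (rather than splitting into small-$r$ and large-$r$ regimes for PZ81/PW92) and omits the Perron--Frobenius remark, but these are presentation details rather than a genuinely different argument.
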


The result of Theorem \ref{thm:nonexistence} is quite significant from a computational point of view.
In numerical methods one, of course, obtains excited states, but in the limit of complete basis sets in infinite volume these might dissolve into metastable states associated with the continuous spectrum.

\begin{proof}
As in the proof of Theorem \ref{thm:neutral_case} the potential of the KS Hamiltonian is in $L^2 + L^\infty_\epsilon$, hence in both cases $\sigma_{ess} \big( h_{\mu, \rho} \big) = [0, \infty)$. 

Next, we prove that there is at least one bound state with eigenvalue $\epsilon_0 < 0$.
By the Rayleigh–Ritz method it suffices to find a $\psi \in D(h_{\mu, \rho}) = H^1(\R^3)$ with $||\psi||_2 = 1$ and $\langle \psi, h_{\mu, \rho} \psi \rangle <0$.
Since for any LDA functional we have $e_{xc} \leq e_{x}$, it suffices to prove the inequality for Dirac exchange.
As a test function we choose the corresponding orbitals we used in the construction of our densities. 
These terms are easily computed to give for the hydrogen case
\begin{align}
   \epsilon_{0}^{H} \leq  \langle \phi_H, h_{\delta_0, \rho_H} \psi_H \rangle
    &= \frac{1}{2} - \frac{3}{8} - \bigg( \frac{3}{\pi^2}\bigg)^{\nicefrac{1}{3}} \frac{27}{64}   = -0.1587 <0 ,\\
\intertext{and for the helium atom with $\alpha = \tfrac{27}{16}$}
    \epsilon_{0}^{\alpha }\leq \langle \phi_\alpha, h_{2\delta_0, \rho_\alpha} \psi_\alpha \rangle
    & = \frac{\alpha^2}{2} - \frac{2\alpha^2 +1}{4\alpha} - \bigg( \frac{6}{\pi^2} \bigg)^{\nicefrac{1}{3}} \frac{27}{64}\alpha
    = -0.1711<0.
\end{align}
So for both atoms we have at least one bound state.

Now we use the upper bound on the number of bound states given in \cite{glaser1976} which says that the number $N_\ell$ of bound states with angular momentum $\ell$ satisfies
\begin{equation} \label{eq:upper-bound}
   N_\ell \leq  (2l+1)^{1-2p} I_p(V),
\end{equation}
where the functional $I_p(V)$ is given by 
\begin{equation} 
I_p(V) = C_p \intr \frac{1}{4 \pi} |x|^{2p-3} \big( V_{-}(x) \big)^p \di x, \quad \text{ with }C_p = \frac{(p-1)^{p-1} \Gamma(2p)}{p^p \Gamma(p)}. 
\end{equation}

For a radially symmetric potential this reduces to
\begin{equation} \label{eq:upper-bound-integral}
I_p(V) = C_p \int \limits_{0}^{\infty}  \frac{1}{x} \big( x^2 V_{-}(x) \big)^p \di x. 
\end{equation}

We call $I_p$ the Glaser-Martin-Grosse-Thirring (GMGT) functional.
Here $V_{-}$ denotes the negative part of the potential, and the parameter $p \geq 1$ for the radially symmetric case while for the general case we have the restriction $p \geq \tfrac{3}{2}$.

If one calculates this integral numerically for our hydrogen density \eqref{eq:hydrogen_density} with Dirac exchange, one obtains that the minimum value is attained at $p = 1.4$ with $I_p(V) \approx 1.61587$.
This gives us the upper bound
\[
N_\ell < \frac{1}{(2l+1)^{1.8}} 1.61587,
\]
which means $N_0 \leq 1$ and all other $N_\ell = 0$ for $\ell \geq 1$.
So up to the degeneracy with respect to spin there is only the ground state and there are no excited states.

For the helium density \eqref{eq:toy_density} 
we computed the GMGT functional $I_p(V)$ 
for the Dirac \cite{dirac_1930}, the Perdew-Zunger \cite{perdew-zunger} and the Perdew-Wang \cite{perdew-wang} exchange-correlation functional and obtained
\begin{align*}
I_p(V^{D}) &= 1.11465, \text{ at the value } p= 1.65,
\\
I_p(V^{PZ}) &= 1.40558, \text{ at the value } p=  1.5,
\\
I_p(V^{PW}) &= 1.40184, \text{ at the value } p=  1.5.
\end{align*}
\begin{figure}[h]
    \centering
    \includegraphics[width = 0.5 \textwidth]{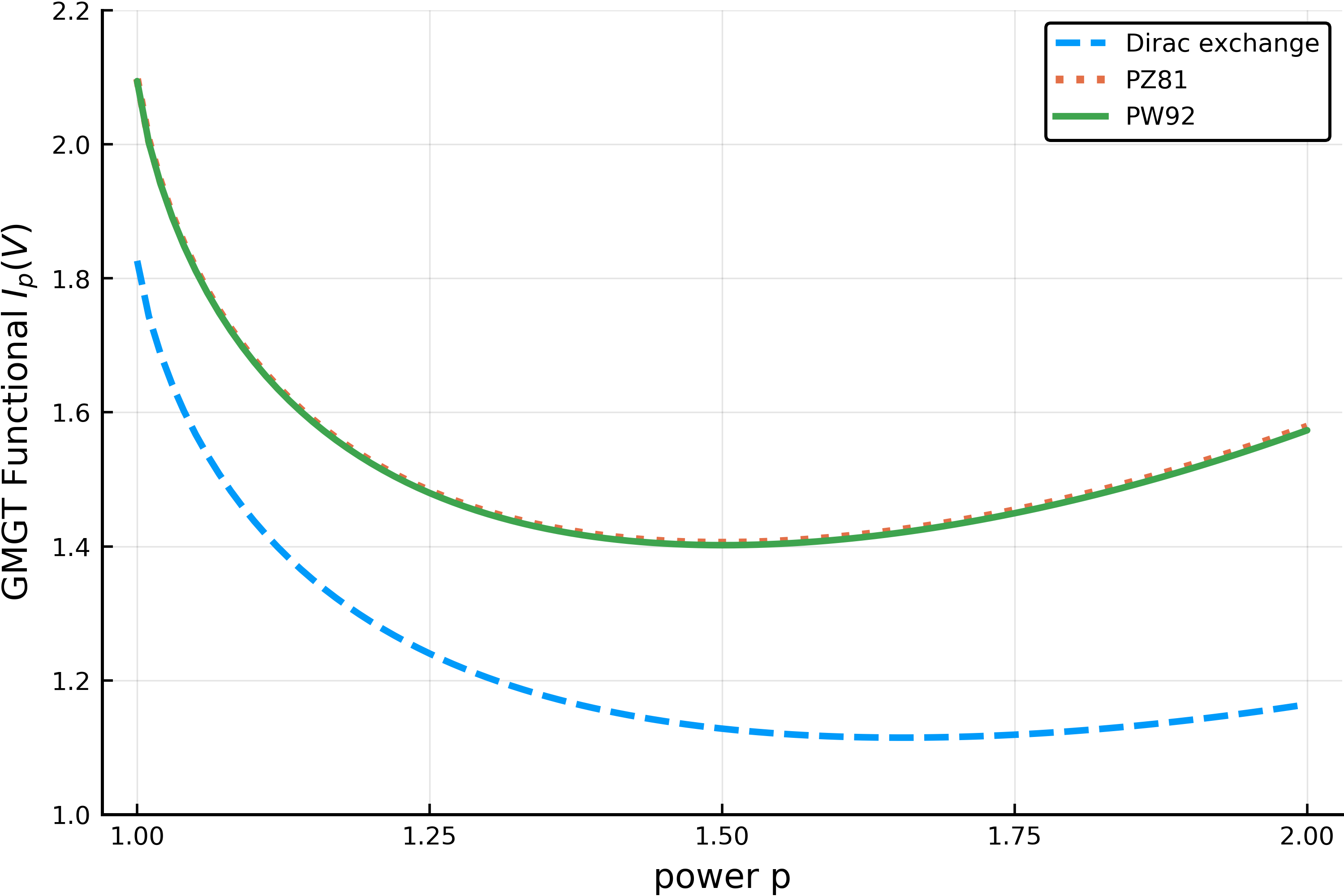}
    \caption{The Glaser-Martin-Grosse-Thirring functional \eqref{eq:upper-bound-integral} of the effective Kohn-Sham potential for different choices of the exchange-correlation functional and the approximate ground state density $\rho_\alpha$ \eqref{eq:toy_density}.
    Values strictly less than 2 mean that there is at most one bouind state.}
    \label{fig:integral_upper_bound}
\end{figure}

Here we denote by $V_{-}$ the potential $V_{-}(x) = 2 e^{-2 \alpha |x|} \big( \frac{1}{|x|} + \frac{1}{\alpha} \big) + v_{xc}(\rho_\alpha(x))$, so in e.g. the Dirac case we have
\[
V_{-}(x) = 2 e^{-2 \alpha |x|} \bigg( \frac{1}{|x|} + \frac{1}{\alpha} \bigg) + \bigg( \frac{3}{\pi} \bigg)^{\tfrac{1}{3}} \frac{2 \alpha}{\pi} e^{-2 \alpha |x|}.
\]
As before, it now follows from \eqref{eq:upper-bound} and \eqref{eq:upper-bound-integral} that no bound state other than the ground state orbital exists. 

Hence for both hydrogen and helium we have exactly one bound state -- the ground state itself -- which corresponds to an eigenfunction with eigenvalue $\epsilon_0 <0$.
\end{proof}

In Figure \ref{fig:integral_upper_bound} the values of $I_p(V)$ as a function of $p$ are given in the case of helium for the three LDAs mentioned above. The figure shows that the minimum value of the GMGT-functional is always attained at some $p\geq 1.5$, i.e. in the interval $p \in [\tfrac{3}{2}, \infty)$, where the upper bound is also valid for non-symmetric potentials.  Hence, as long as the real KS-density is only a small perturbation of our model density $\rho_\alpha$, even if it were not symmetric, our results would hold. 
\\[2mm]
The overall conclusion of this section is that there are no LUMO excitation for the model densities \eqref{eq:hydrogen_density} and \eqref{eq:toy_density}.
\\[2mm]
Our findings raise the following interesting questions which lie beyond the scope of the present paper. First, is the GMGT nonexistence criterion satisfied for numerically obtained ground state densities of hydrogen and helium, or more complex atoms and molecules?
A second question is whether the absence of exact excitations persists for more advanced excitation models like the Casida ansatz.

\vspace*{7mm}

\noindent
\textbf{Acknowledgements.}
This project was supported by the DFG through the IGDK 1754 \textit{Optimization and Numerical Analysis for Partial Differential Equations with Nonsmooth Structures.}
\vspace*{2mm}

\section{Appendix: Analytical properties of PZ81 and PW92}
In the following we verify explicitly that the correlation functionals PZ81 and PW92 satisfy  the assumptions \eqref{eq:assumptions} and \eqref{eq:assumption_at_zero} of this paper.  
In the physics literature the exchange-correlation functional is usually specified by the energy per particle at the density $\rho$, denoted $\epsilon_{c}(\rho)$.
We work with the mathematically convenient energy per unit volume $e_{xc}(\rho) =  \rho \epsilon_{c}(\rho)$.
The exchange part was already discussed in the Remark \ref{rem:assumptions_dirac}, so we only need to deal with the correlation part.

Furthermore, recall the Wigner-Seitz radius
\begin{equation} \label{eq:wigner_seitz}
    r_s(x) := \bigg( \frac{3}{4 \pi \rho(x)} \bigg)^{\tfrac{1}{3}},
\end{equation}
which is a standard parameter in physics to describe the local electron density of a system.
Lastly, we remark that in the following $C$ will describe a generic constant, which may have different values
at each appearance, but is independent of $\rho$ and $r_s$.

\subsection*{Perdew-Wang (PW92)}
In this paper we consider only spin-unpolarized systems, so we have 
$\zeta = \frac{n_{\uparrow} - n_{\downarrow}}{n_{\uparrow} + n_{\downarrow}} =0$. 
So (in the notation of the original paper) we only need to check the assumptions \eqref{eq:assumptions} and \eqref{eq:assumption_at_zero} for $e_{c}(r_{s},0)$.
The PW92 correlation functional is given by
\[
\epsilon_{c} (r_s) = - 2 A (1+\alpha_1 r_s) \log \bigg(  1+ \frac{1}{2A\big( \beta_1 r_{s}^{\nicefrac{1}{2}} + \beta_{2} r_{s} + \beta_{3} r_{s}^{\nicefrac{3}{2}} + \beta_{4} r_{s}^{2}\big)}\bigg).
\]
In order to improve readability of the arguments below, we define the following functions
\[
f(r_s): =  - 2 A (1+\alpha_1 r_s), \qquad g(r_s) := 2A\big( \beta_1 r_{s}^{\nicefrac{1}{2}} + \beta_{2} r_{s} + \beta_{3} r_{s}^{\nicefrac{3}{2}} + \beta_{4} r_{s}^{2} \big).
\]
So returning to the notation of our paper we need to check \eqref{eq:assumptions} and \eqref{eq:assumption_at_zero} for the function
\[
e_{c}(\rho)
=
\rho \cdot \epsilon_{c} (r_s(\rho))
=
\rho  f(r_s(\rho)) \log \bigg( 1 + \frac{1}{g(r_s(\rho))} \bigg).
\]
This function is clearly continuously differentiable for $\rho>0$, so we only check the limit $\rho \to 0$.
Since $f(r_s(\rho)) = O(r_s) = O(\rho^{-\nicefrac{1}{3}})$ and $g(r_s) = \Omega(\sqrt{r_s})$ for $\rho \to 0$, repectively $r_s \to \infty$, we get
\[
\limsup \limits_{\rho \to 0}  | e_c(\rho) |  \leq
\limsup \limits_{\substack{\rho \to 0 \\r_s \to \infty }} C   \rho^{\nicefrac{2}{3}} \log \big( 1  + C r_{s}^{-\nicefrac{1}{2}}\big) =0,
\]
hence with $e_c(0) =0$, $e_{c}$ is continuous.

Next we calculate the derivative

\begin{align} \label{eq:perdew_wang}
    v_{c}(\rho) =
    \drho e_c(\rho)
    &=
    f(r_s) \log \big( 1 + \tfrac{1}{g(r_s)} \big)
    +
    \rho   \big(\drho r_{s} \big)   \log \big( 1 + \tfrac{1}{g(r_s)} \big) \drs f(r_s)
    +
    \rho  f(r_s)   \big( \drho r_{s}   \big) \drs \log \big( 1 + \tfrac{1}{g(r_s)} \big) \nonumber \\
\intertext{using $\drho r_s = - \tfrac{1}{3} \tfrac{r_s}{\rho}$ we obtain}  
    v_c ( \rho)  &=
    \log \big( 1 + \tfrac{1}{g(r_s)} \big) \big[ f(r_s) + \tfrac{2 A \alpha_1}{3} r_s \big]  + \tfrac{1}{3} r_s f(r_s)\frac{\drs g(r_s)}{g(r_s) + g(r_s)^2}. \nonumber \\
\intertext{Since $g(r_s)$ consists only of powers of $r_s$ its derivative is $\drs g(r_s) = \tfrac{1}{r_s} \Theta(g(r_s))$ and we get}
v_c ( \rho)  &=
    \log \big( 1 + \tfrac{1}{g(r_s)} \big) \big[ f(r_s) + \tfrac{2 A \alpha_1}{3} r_s \big]  +  f(r_s) \Theta \big( \tfrac{1}{g(r_s) + 1} \big).
\end{align}
Applying the inequality $\log(1 + x) \leq x$ for $x > -1$, using $\tfrac{1}{g(r_s)} = O(r_s^{-2})$ and taking the limit $\rho \to 0, r_s \to \infty$ gives now

\[
\limsup \limits_{\rho \to 0}  | e_c(\rho) |  \leq
\limsup \limits_{\substack{\rho \to 0 \\r_s \to \infty }}
C \tfrac{1}{r_s^2} (1 + r_s) = 0.
\]
So also $v_{c}$ is continuous and hence it suffices to prove $v_{c} \leq c_{xc} (1 + \rho^{p-1})$ for $\rho \to \infty$, i.e. $r_s \to 0$. Using $f(r_s) \to C$,$g(r_s) \to 0$ and $\tfrac{1}{g(r_s)} = O(r_{s}^{-\nicefrac{1}{2}})$ we obtain
\[
|v_{c}(\rho)| 
\leq
C \bigg( 1 + \log \big( 1 + \tfrac{1}{g(r_s)} \big) \bigg)
\leq 
C \big( 1 +r_{s}^{-\nicefrac{1}{2}}\big)
= 
C \big( 1 + \rho^{\nicefrac{1}{6}}\big),
\]
so  \eqref{eq:assumptions}holds with $p= \tfrac{7}{6}$ in.

For \eqref{eq:assumption_at_zero} we can choose $q = \tfrac{17}{12}$ (any value between $\tfrac{4}{3}$ and $\tfrac{3}{2}$ will do). Writing the condition in terms of $r_s$ then gives us 
\[
\limsup \limits_{r_s \to \infty} \big( r_s^3\big)^{q-1}
 f(r_s) \log \bigg( 1 + \frac{1}{g(r_s)} \bigg) <0.
\]
Realizing that $f(r_s) \sim - r_s $ for $r_s \to \infty$ and that $\tfrac{x}{1+x} \leq \log(1+x) \leq x$ implies for the $\log$-term $\Theta (\tfrac{1}{1+ g(r_s)}) = \log(1 + \tfrac{1}{g(r_s)})$, transforms this into
\[
\limsup \limits_{r_s \to \infty} \big( r_s^3\big)^{q-1}
 (- r_s +1 ) \Theta(\tfrac{1}{r_s^2}) <0,
\]
which is true if $q > \tfrac{4}{3}$.

\subsection*{Perdew-Zunger (PZ81)}
Here we again consider spin-unpolarized systems, i.e. $\zeta = 0$, as in the case of PW92. 
The precise value of the PZ81 constants is important for continuity and continuous differentiability; they were chosen in such a way that $e_{c}$ is continuously differentiable.

The PZ81 correlation functional is given by
\begin{equation}
    \epsilon_c(r_s) : = 
    \begin{cases}
     \frac{\gamma}{ 1 +\beta_1 \sqrt{r_s} + \beta_2 r_s} 
     & \text{for } r_s >1, \\
     A\log(r_s) + B + C r_s \log(r_s) + D r_s 
     & \text{for } r_s  \leq 1. 
    \end{cases}
\end{equation}
Here we used the notation of the original paper and hence $\gamma, B,D$ are negative.
For our analysis we now need to consider $e_{c} = \rho \epsilon(r_s(\rho))$.
Hence, we calculate $v_{c}$ to be

\begin{align*}
v_{c} (\rho)
&=
   \begin{cases}
     \frac{\gamma}{ 1 +\beta_1 \sqrt{r_s} + \beta_2 r_s} 
     +
     \rho \big( \drho r_s\big) \drs \big(  \frac{\gamma}{ 1 +\beta_1 \sqrt{r_s} + \beta_2 r_s} \big)
     & \text{for } r_s >1, \\
     \big( A\log(r_s) + B + C r_s \log(r_s) + D r_s \big)
     +
     \rho \big( \drho r_s\big) \drs \big(A\log(r_s) + b + C r_s \log(r_s) + D r_s  \big)
     & \text{for } r_s  \leq 1. 
    \end{cases} \\
&=
\begin{cases}
     \frac{\gamma}{ 1 +\beta_1 \sqrt{r_s} + \beta_2 r_s} 
     +
     \frac{\gamma}{3}  \frac{\tfrac{\beta_1}{2} \sqrt{r_s} + \beta_2 r_s}{ \big(1 +\beta_1 \sqrt{r_s} + \beta_2 r_s \big)^2} 
     & \text{for } r_s >1, \\
     A\log(r_s) + B + C r_s \log(r_s) + D r_s 
     +
     - \tfrac{1}{3} \big(A + C r_s + C r_s \log(r_s) + D r_s  \big)
     & \text{for } r_s  \leq 1. 
    \end{cases} 
\end{align*}

The continuity for $\rho \to 0$ is now checked easily: 
\[
\lim_{\rho \to 0} e_{c} (\rho)
=
\lim_{r_s \to \infty} \frac{3}{4 \pi}  r_{s}^{-3}  \frac{\gamma}{ 1 +\beta_1 \sqrt{r_s} + \beta_2 r_s} = 0
\]
and

\[
\lim_{\rho \to 0} v_{c} (\rho)
=
\lim_{r_s \to \infty} 
\frac{\gamma}{ 1 +\beta_1 \sqrt{r_s} + \beta_2 r_s} 
     +
      \frac{\tfrac{\gamma}{3} \tfrac{\beta_1}{2} \sqrt{r_s} + \beta_2 r_s}{ \big(1 +\beta_1 \sqrt{r_s} + \beta_2 r_s \big)^2} 
= 0.
\]

The continuity of $e_{c}$ and $v_{c}$ at the value $r_s =1$ follows from the choice of constants in the original paper \cite{perdew-zunger}, since 
\[
\lim_{r_s \to 1+} \epsilon_{c}(\rho(r_s)) = \frac{\gamma}{1 + \beta_1 + \beta_2}
= B + D
=
\lim_{r_s \to 1-} \epsilon_{c}(\rho(r_s))
\]
and
\[
\lim_{r_s \to 1+} v_{c}(\rho(r_s)) 
= \frac{\gamma}{1 + \beta_1 + \beta_2} +  \frac{\gamma}{3}  \frac{\tfrac{\beta_1}{2} + \beta_2 }{ \big(1 +\beta_1  + \beta_2 \big)^2} 
= B + D - \tfrac{1}{3} (D + A + C)
=
\lim_{r_s \to 1-} v_{c}(\rho(r_s)).
\]

Since $v_{c} $ is continuous,  we only need to check the bound \eqref{eq:assumptions} for $\rho \to \infty$, i.e. $r_s \to 0$. But here the only term which is not bounded is $A \log(r_s)$ for which we again use a standard $\log$-bound, $0 \geq \log(x) \geq 1 - \tfrac{1}{x}$ for $ 0 < x \leq 1$, so $ | v_{c} (r_s)| \leq C \big( 1 + \tfrac{1}{r_s} \big)$ for $r_s$ small enough.
In terms of $\rho $ this means $|v_{c}(\rho)| = C \big( 1 + \rho^{\nicefrac{1}{3}} \big)$, so  \eqref{eq:assumptions} holds with $p= \tfrac{4}{3}$. \\
Also, \eqref{eq:assumption_at_zero} holds with $q = \tfrac{4}{3}$, since then plugging in the relation \eqref{eq:wigner_seitz} between $r_s$ and $\rho$ \eqref{eq:wigner_seitz} yields
\[
\limsup_{ \rho \to 0}  \frac{e_c(\rho)}{\rho^q}
=
\limsup_{r_s \to \infty} \frac{\gamma r_s \big( \tfrac{4 \pi}{3}\big)^{\nicefrac{1}{3}} }{ 1 +\beta_1 \sqrt{r_s} + \beta_2 r_s} 
=
\big( \tfrac{4 \pi}{3}\big)^{\nicefrac{1}{3}} \tfrac{\gamma}{\beta_2} <0.
\]

\begin{small}
\begin{spacing}{0.001} 

\addcontentsline{toc}{section}{References}
\bibliographystyle{plain}
\bibliography{references}

\begin{thebibliography}{10}

\bibitem{cances}
A.~Anantharaman and E.~Canc\`es.
\newblock Existence of minimizers for {K}ohn-{S}ham models in quantum
  chemistry.
\newblock {\em Ann. Inst. H. Poincar\'{e} Anal. Non Lin\'{e}aire},
  26(6):2425--2455, 2009.

\bibitem{Baerends2013}
E.J. Baerends, O.V. Gritsenko, and R.~Van~Meer.
\newblock The kohn-sham gap, the fundamental gap and the optical gap: The
  physical meaning of occupied and virtual kohn-sham orbital energies.
\newblock {\em Physical Chemistry Chemical Physics}, 15(39):16408--16425, 2013.

\bibitem{Becke2014}
A.D. Becke.
\newblock Perspective: Fifty years of density-functional theory in chemical
  physics.
\newblock {\em Journal of Chemical Physics}, 140(18), 2014.

\bibitem{catto_lebris_lions}
I.~Catto, C.~Le~Bris, and P.-L. Lions.
\newblock On the thermodynamic limit for {H}artree-{F}ock type models.
\newblock {\em Ann. Inst. H. Poincar\'{e} Anal. Non Lin\'{e}aire},
  18(6):687--760, 2001.

\bibitem{cramer2002essentials}
C.J. Cramer.
\newblock {\em Essentials of Computational Chemistry: Theories and Models}.
\newblock Wiley, 2002.

\bibitem{Dai_Ho1995}
H.-L. Dai and W.~Ho.
\newblock {\em Laser Spectroscopy and Photochemistry on Metal Surfaces}.
\newblock World Scientific Publishing Company, 1995.

\bibitem{dirac_1930}
P.~A.~M. Dirac.
\newblock Note on exchange phenomena in the thomas atom.
\newblock {\em Mathematical Proceedings of the Cambridge Philosophical
  Society}, 26(3):376–385, 1930.

\bibitem{frank2007muller}
R.~L. Frank, E.~H. Lieb, R.~Seiringer, and H.~Siedentop.
\newblock M{\"u}ller’s exchange-correlation energy in
  density-matrix-functional theory.
\newblock {\em Physical Review A}, 76(5):052517, 2007.

\bibitem{Friesecke2003}
G.~Friesecke.
\newblock The multiconfiguration equations for atoms and molecules: Charge
  quantization and existence of solutions.
\newblock {\em Archive for Rational Mechanics and Analysis}, 169(1):35--71,
  2003.

\bibitem{Friesecke-kniely}
G.~Friesecke and M.~Kniely.
\newblock New optimal control problems in density functional theory motivated
  by photovoltaics.
\newblock {\em arXiv:1808.04200, to appear in Multiscale Model. Simul.}, 2018.

\bibitem{glaser1976}
V.~Glaser, A.~Martin, H.~Grosse, and W.~Thirring.
\newblock A family of optimal conditions for the absence of bound states in a
  potential.
\newblock {\em Les rencontres physiciens-math{\'e}maticiens de
  Strasbourg-RCP25}, 23:0--21, 1976.

\bibitem{Ho1996}
W.~Ho.
\newblock Reactions at metal surfaces induced by femtosecond lasers, tunneling
  electrons, and heating.
\newblock {\em The Journal of Physical Chemistry}, 100(31):13050--13060, 1996.

\bibitem{hohenbergkohn64}
P.~Hohenberg and W.~Kohn.
\newblock Inhomogeneous electron gas.
\newblock {\em Phys. Rev. (2)}, 136:B864--B871, 1964.

\bibitem{kohnsham65}
W.~Kohn and L.~J. Sham.
\newblock Self-consistent equations including exchange and correlation effects.
\newblock {\em Phys. Rev. (2)}, 140:A1133--A1138, 1965.

\bibitem{LewinHF}
M.~Lewin.
\newblock Existence of {H}artree-{F}ock excited states for atoms and molecules.
\newblock {\em Lett. Math. Phys.}, 108(4):985--1006, 2018.

\bibitem{Lions1}
P.-L. Lions.
\newblock The concentration-compactness principle in the calculus of
  variations. {T}he locally compact case. {I}.
\newblock {\em Ann. Inst. H. Poincar\'{e} Anal. Non Lin\'{e}aire},
  1(2):109--145, 1984.

\bibitem{Lions2}
P.-L. Lions.
\newblock The concentration-compactness principle in the calculus of
  variations. {T}he locally compact case. {II}.
\newblock {\em Ann. Inst. H. Poincar\'{e} Anal. Non Lin\'{e}aire},
  1(4):223--283, 1984.

\bibitem{Lions-Hartree}
P.-L. Lions.
\newblock Solutions of {H}artree-{F}ock equations for {C}oulomb systems.
\newblock {\em Comm. Math. Phys.}, 109(1):33--97, 1987.

\bibitem{ParrYang1994}
R.~G. Parr and W.~Yang.
\newblock {\em Density-Functional Theory of Atoms and Molecules (International
  Series of Monographs on Chemistry)}.
\newblock Oxford University Press, USA, 1994.

\bibitem{perdew-wang}
J.~P. Perdew and Y.~Wang.
\newblock Accurate and simple analytic representation of the electron-gas
  correlation energy.
\newblock {\em Phys. Rev. B}, 45:13244--13249, Jun 1992.

\bibitem{perdew-zunger}
J.~P. Perdew and A.~Zunger.
\newblock Self-interaction correction to density-functional approximations for
  many-electron systems.
\newblock {\em Phys. Rev. B}, 23:5048--5079, May 1981.

\bibitem{reed1978methods}
M.~Reed and B.~Simon.
\newblock Methods of modern mathematical physics. iv. analysis of operators.
  academic, 1978.

\bibitem{Zhang2007}
G.~Zhang and C.B. Musgrave.
\newblock Comparison of dft methods for molecular orbital eigenvalue
  calculations.
\newblock {\em Journal of Physical Chemistry A}, 111(8):1554--1561, 2007.

\bibitem{zhislin1960discussion}
G.~M. Zhislin.
\newblock Discussion of the spectrum of schr{\"o}dinger operators for systems
  of many particles.
\newblock {\em Trudy Moskovskogo matematiceskogo obscestva}, 9:81--120, 1960.

\end{thebibliography}


\end{spacing}
\end{small}

\end{document}